\newtheorem{lemma}{Lemma}
\newcounter{claimctr}[lemma]
\newcounter{remctr}
\newenvironment{remark}{\medskip\par\noindent\refstepcounter{remctr}\textbf{Remark \theremctr.} }{\smallskip\par}
\newenvironment{myclaim}{\refstepcounter{claimctr}\medskip\par\noindent\textit{Claim \theclaimctr.}}{\par}
\newtheorem{defn}{{\bf Definition}}[section]
\newtheorem{theorem}[defn]{{\bf Theorem}} 
\newtheorem{corollary}{Corollary}
\title{New bounds on the anti-Ramsey numbers of star graphs}
\date{}
\author[1]{L. Sunil Chandran \footnote{part of this work was done
		while the author was visiting Max-Planck Institute for Informatik,
		Saarbruecken,\\ Germany, under Alexandr von Humboldt fellowship.}}
\author[1]{Talha Hashim}
\author[1]{Dalu Jacob}
\author[2]{Rogers Mathew}
\author[3]{Deepak Rajendraprasad}
\author[4]{Nitin Singh}
\affil[1]{Department of Computer Science and Automation\\ 
	Indian Institute of Science, Bangalore-560012, India \\
	\texttt{sunil@iisc.ac.in, talhahashim@iisc.ac.in, dalu1991@gmail.com}}
\affil[2]{Department of Computer Science and Engineering\\
	Indian Institute of Technology, Hyderabad-502285, India\\
	\texttt{rogers@cse.iith.ac.in}}
\affil[3]{Department of Computer Science and Engineering\\
	Indian Institute of Technology, Palakkad-678557, India\\
	\texttt{deepakmail@gmail.com}}
\affil[4]{IBM Research Lab, Manyata Embassy Business Park\\
	Bangalore-560045, India\\
	\texttt{nitisin1@in.ibm.com}}
\begin{document}
	
	\maketitle

	\begin{abstract}
		The  {\it anti-Ramsey number} $ar(G,H)$ with {\it input graph} $G$ and {\it pattern graph} $H$, is the maximum positive integer $k$ such that there exists an edge coloring of $G$ using
		$k$ colors, in which there are no {\it rainbow} subgraphs isomorphic to $H$ in $G$.  ($H$ is
		{\it rainbow} if all its edges get distinct colors). 
		The concept of  {\it anti-Ramsey number} 
		was introduced by Erd\"os, Simanovitz, and S\'os
		in 1973.  Thereafter several researchers investigated this concept in the
		{\it combinatorial} setting.  
		The cases where pattern graph $H$ is a complete graph $K_r$, a path $P_r$ or a star $K_{1,r}$ for  a fixed positive integer $r$,
		are well studied.
		Recently, Feng et al. revisited the  {\it anti-Ramsey}  problem for the pattern graph
		$K_{1,t}$ (for $t  \ge  3$) purely from an {\it algorithmic}
		point of view, due to its applications in {\it interference modeling} of
		{\it wireless networks}.  They posed it as an optimization problem, the \emph {maximum edge $q$-coloring problem.} 
		For a graph $G$ and an integer $q\geq 2$, an edge $q$-coloring of $G$ is an assignment of colors to edges of $G$, such that edges incident on a vertex span at most $q$ distinct colors. The {\it maximum edge} $q$-coloring problem seeks to {\it maximize} the number of colors in an edge $q$-coloring of the  graph $G$. 
		Note that the {\it optimum value} of the edge $q$-coloring problem of $G$  equals $ar(G,K_{1,q+1})$.  
		
		In this paper, we study $ar(G,K_{1,t})$, the {\it anti-Ramsey number} of stars, for each fixed integer $t\geq 3$,  both from  {\it combinatorial} and 
		{\it algorithmic} point of view.
		The first  of our main results presents  an upper bound for $ar(G,K_{1,q+1})$,
		in terms of number of vertices and the minimum degree of $G$. The second one improves this
		result for the case of triangle-free input graphs.
		
		For a positive integer $t$, let $H_t$ denote a subgraph of $G$ with maximum number of possible edges and maximum degree $t$.  From an observation of  Erd\"os, Simanovitz, and S\'os, we get:  $|E(H_{q-1})| + 1\leq ar(G,K_{1,q+1}) \leq |E(H_{q})|$. For instance, when $q=2$, the subgraph $E(H_{q-1})$ refers to a maximum matching.
		It looks like $|E(H_{q-1})|$ is the most natural parameter associated with the anti-Ramsey number $ar(G,K_{1,q+1})$, and the approximation algorithms
		for the maximum edge coloring problem usually proceed by first computing the $H_{q-1}$, then
		coloring all its edges with different colors and by giving one  (sometimes more than one)  extra colors to the remaining edges. The approximation guarantee of these algorithms usually depends on
		upper bounds for $ar(G,K_{1,q+1})$ in terms of $|E(H_{q-1})|$.
		
		Our third main result presents an upper bound for $ar(G,K_{1,q+1})$ in terms of  
		$|E(H_{q-1})|$. 
		
		All our results have algorithmic consequences. For some large special classes of graphs, 
		such as $d$-regular graphs, where $d\geq 4$, 
		our results can be used to prove a better approximation guarantee for the well-known approximation algorithm for the maximum edge coloring problem. We also show that all  our bounds are almost tight.
	\end{abstract}

	~~~~
	
	\noindent {\bf Key Words:} Anti-Ramsey Number, Approximation algorithm,
	Maximum edge $q$-coloring problem.
	
	~~~
	
	\section{Introduction}

	The concept of {\it anti-Ramsey number} was introduced by Erd\"os and Simanovitz 
	in 1973~\cite {PEMS}.  
	A $k$-edge-coloring of $G$  is a function
	$c:E(G) \rightarrow [k]$. (Note that an edge coloring here  need not be proper.)
	A subgraph $G'$ of $G$ is called a {\it rainbow subgraph} with respect to an
	edge coloring $c$ of $G$, if all the edges of $G'$ are colored distinctly,
	i.e. no two edges of $G'$  share the same color in the coloring $c$. 
	The {\it anti-Ramsey number} $ar(G,H)$ is defined to be the largest integer $k$
	such that there exists an edge-coloring of $G$ using $k$ colors, in which there are no {\it rainbow} subgraphs isomorphic to $H$ in $G$. 
	In this paper, we will refer to the first parameter of $ar(G,H)$
	as {\it `input graph'} and 
	the second parameter as the {\it `pattern graph'}. Typically  the {\it pattern graph} $H$ is a {\it fixed graph} on $r$ vertices with simple structure,
	like the {\it complete graph} $K_r$, the path on $r$ vertices $P_r$ or the star
	on $r$ vertices, $K_{1, r}$. For convenience's sake, we  will call $ar(K_n,H)$, where the input graph is a complete graph as the \textit{classical anti-Ramsey number} of $H$, and $ar(G,H)$ where the input graph is {\it any} simple  graph on $n$ vertices as {\it the general anti-Ramsey number} of $H$.
	
	\medskip


	\medskip
	
	After Erd\"os and Simanovitz introduced {\it anti-Ramsey numbers}, several
	researchers investigated this concept. 
	A survey of known results is available in~\cite {FCO}. As mentioned before, the {\it pattern graphs} are usually graphs of very simple structure
	(like complete graphs, paths, stars, cycles) on a fixed number of vertices. 
	The case where the {\it pattern graph} is $P_{k}$, the path on $k$ vertices was studied
	by  Simanovitz and S\'oz~\cite {SS}, who gave an exact expression for the classical anti-Ramsey number of $P_k$, (i.e. $ar(K_n, P_k)$).  
	When the {\it pattern graph} is a cycle of length $k$ (denoted by $C_k)$, Erd\"os, Simanovitz and S\'os    
	proved that  $ar(K_n,C_3) = n-1$ and conjectured that  $ar(K_n, C_k) =
	\big(\frac {k-2}{2} + \frac {1} {k-1}\big) n + O(1) $ for $k \ge 4$~\cite {PEMS}.
	This was indeed verified by Alon et. al. \cite {Alon} for  {\it pattern graph}  $C_4$.   $ar(K_n, C_k)$ was also studied
	by Jiang and West \cite {JW}. 
	Montellano-Ballesteros and Neumann-Lara  verified the conjecture for any $k$~\cite {MBNL}, about 30 years after it was conjectured. 
	The case where the {\it pattern graph} is $K_r$, (i.e. $ar(K_n,K_r)$) was independently studied in~\cite {Schier} and~\cite {MBNL1}. Apart from the above, Manoussakis et. al.~\cite {MSTV} had studied $ar(K_n,K_{n-1})$.
	Another variant was considered in~\cite {MBNL2}: They studied rainbow stars within chosen
	subsets of vertices in colored multi-graphs.  
	The classical  anti-Ramsey number of $tP_2$ (i.e. $ar(K_n, tP_2)$, where $tP_2$ represent a matching consisting of $t$ edges)
	was studied by  Schiermeyer~\cite {Schier}, Fujita, Kaneko, Schiermeyer and Suzuki~\cite {FKSS}, Haas and Young~\cite {HY}.  
	The case where the {\it pattern graphs} are
	sub-divided graphs~\cite {Jiang2} was  studied by Jiang, and  trees of order $k$ 
	was studied by Jiang and West~\cite {JW1}.
	The case where the input graph is $K_{n,n}$ instead of $K_n$ was considered 
	with even cycles as {\it pattern graphs} by Axenovich et.al.~\cite {AJK}
	and stars as {\it pattern graphs} by Jiang~\cite {Jiang}. 
	
	\medskip
	
	The classical anti-Ramsey number of $K_{1,t}$ (i.e.  $ar(K_n,K_{1,t})$) and its
	bipartite analogue $ar(K_{n,n}, K_{1,t})$ is extensively studied in literature:
	Improving the bounds of~\cite {MSTV}, both Jiang~\cite {Jiang} and Montellano-Ballesteros~\cite {JJMB} proved the following
	formula for $ar(K_n,K_{1,t})$:

	\begin {eqnarray}
	\left \lfloor \frac {n(t-2)} {2} \right \rfloor  +  \left \lfloor \frac {n} {n-t+2} \right \rfloor   \le  ar   (K_n,  K_{1,t} ) \le \left \lfloor \frac {n(t-2)} {2} \right \rfloor  +  \left \lfloor \frac {n} {n-t+2} \right \rfloor  + 1 
	\end {eqnarray}

	The bipartite analogue was considered by Jiang~\cite {Jiang}, who proved the following:
	
	\begin {eqnarray}
	ar (K_{n,n}, K_{1,t} ) =  n (t-2) + \left \lfloor \frac {n} {n-t+2} \right \rfloor + 1   
	\end {eqnarray}

	In this paper, we study the {\it general anti-Ramsey number} of {\it star graphs}. i.e. we will consider only the star graphs as the {\it pattern graph},
	namely the class $K_{1,t}$ for any fixed integer $t\geq 3$. Though we restrict the {\it pattern graph},  we allow the
	{\it input graph} to be a {\it general graph} and study the {\it general anti-Ramsey number} of star graphs, $ar(G, K_{1,t})$. 
	It should be mentioned that Montellano-Ballesteros~\cite {JJMB} had studied the generalized
	{\it anti-Ramsey number} of stars, $ar(G, K_{1,t})$, and had proved an upper bound expression
	which depends on the value of a particular function defined on certain subgraphs of
	the given graph. Their expression is not easy to state or use, but they cleverly
	use that expression to derive the generalised {\it anti-Ramsey number} of stars for
	complete graphs, complete $r$-partite graphs, hypercubes, Cartesian products of
	two cycles, etc.
	The advantage of their approach is that it is an attempt to use a uniform  technique
	for the above-listed input graphs.  As far as we can see, it is not possible
	to get any simple formula as an upper bound 
	for $ar(G,K_{1,t})$, or even for $ar(G,K_{1,3})$ from their expression: rather it provides a method to
	try on special cases, wherever it works.

	\smallskip
	\noindent {\bf The edge $q$-coloring problem:}  Interestingly,
	the problem of finding $ar(G,K_{1,q+1})$, for integers $q\geq 2$, was revisited recently by  some applied researchers
	due to its applications in  modeling 
	channel assignment in networks equipped with multi-channel wireless interfaces~\cite{ARTC}. 
	For a graph $G$ and an integer $q\geq 2$, the edge $q$-coloring problem
	seeks to maximize the number of colors used to color the edges of $G$,
	subject to the constraint that a vertex $v\in V(G)$, is incident with edges of
	at most $q$ different colors.  It is easy to see that the number of colors
	in an optimal edge $q$-coloring  (hereafter refered to as {\rm OPT}) equals $ar(G,K_{1,q+1})$.  
	
	\smallskip
	
	The maximum edge $q$-coloring problem has been studied from
	an  algorithmic perspective. Adamaszek  and Popa~\cite{AAAP} proved that the problem is NP-hard for every
	$q\geq 2$. Moreover, they proved that assuming the {\it unique games
		conjecture}, it cannot be approximated within a factor less than $1+1/q$ for
	every $q\geq 2$ and assuming just $P\neq NP$, it cannot be approximated
	within a factor less than $1+(q-2)/(q-1)^2$ for every $q\geq 3$~\cite{PC}. 
	An algorithm (Algorithm~\ref{alg:feng}) with an approximation guarantee 2 for the maximum edge
	$2$-coloring for general graphs was given by Feng et. al. in~\cite{WFPC,WFLZ,WFLZ2}. The same algorithm has an approximation guarantee, $(1+\frac{4q-2}{3q^2-5q+2})$, for $q>2$.
	The algorithm
	of  Feng et. al.  is simple and intuitive and is given
	below (Algorithm~\ref{alg:feng}). 
	
	\medskip
	
	Note that for a positive integer $t$, a $t$-\emph{factor} of $G$ is a $t$-regular spanning subgraph of $G$. Now we define a $t$-\emph{sub-factor} of $G$, to be a sub-graph of $G$ with maximum degree $t$. Further, we call a subgraph of $G$ to be a \emph{maximum $t$-subfactor}, if it is a $t$-sub-factor of $G$ having the maximum number of possible edges. For instance, any maximum matching in $G$ is a maximum $1$-sub-factor of $G$, whereas a perfect matching (if exists) in $G$ is a 1-factor of $G$. Hereafter, we refer to Algorithm~\ref{alg:feng} as the {\em sub-factor based algorithm}. Recall that a maximum matching in a graph can be computed in polynomial time. As a generalization of this, Gabow~\cite{Gabow} proposed an $O(|V||E|\mathrm{log}|V|)$-time algorithm for finding a maximum $t$-sub-factor in an input graph. Consequently, Algorithm~\ref{alg:feng} runs in polynomial time.
	Feng et.al.~\cite{WFLZ2} also showed that the 2-edge coloring problem is
	polynomial-time solvable for trees and complete graphs. Adamaszek and Popa~\cite{AAAP}
	showed that the approximation guarantee of the sub-factor based algorithm (when $q=2)$ of Feng et. al. can be improved to
	$5/3$, for graphs with {\it perfect matching}. Later, for {\em triangle-free graphs with perfect matching}, this approximation guarantee (when $q=2$) was improved to $8/5$ by Chandran et.al.~\cite{Chandran}. 
	
	\begin{algorithm}
		\label{alg:feng}
	\end{algorithm}
	\begin{tcolorbox}
		\textbf{\underline {Algorithm~1}}\vspace{.4cm}\\
		\textbf{Input:} A graph $\mathbf{G}$ and a positive integer $\mathbf{q\geq 2}$
		
		\medskip
		Find a maximum $(q-1)$-sub-factor, say $M$ of $G$;
		
		\smallskip
		To each edge in $M$ assign a new color;
		
		\smallskip
		For each non-trivial connected component $C$ of the subgraph $G'=G\setminus M$ of $G$, \\
		assign a new single color to all the edges in $C$; 
		
		\medskip
		\textbf{Output:} each edge along with the color assigned to it.
	\end{tcolorbox}


	\subsection {Our Results:} \label{results}
	
	Our contributions in this  paper are both 
	combinatorial and algorithmic: On the one hand, we give bounds for
	the general anti-Ramsey number of stars in the same style as~\cite {Jiang,MSTV,JJMB} and, on the other hand,
	we improve  the approximation guarantee
	of the sub-factor based algorithm (Algorithm~\ref{alg:feng}) 
	significantly for graphs with a large minimum degree, supplementing the
	works of~\cite {AAAP,WFPC,WFLZ,WFLZ2}. 
	
	Specifically, we obtain
	the following results: Results~1 and  2
	supplements the traditional literature on anti-Ramsey numbers,
	whereas Result~3, though essentially a combinatorial bound, is motivated by
	improving the approximation guarantee for large special classes of graphs, such as regular graphs and 
	graphs of minimum degree greater than $C \sqrt n$. Result~$3^\prime$ is a generalization of Result~3 for each integer $q\geq2$. We also propose two constructions for families of graphs to demonstrate that the bounds we obtained are \textit{almost} tight. First, we note the following comment.
	
	\medskip
	
	\noindent {\bf  Comment:} 
	Given a family  of graphs ${\cal H}$, let the Turan-type extremal number 
	$ex(G, {\cal H} )$ be  defined as  the maximum number
	of edges in a subgraph of $G$  containing no copy of $H$ for any 
	$H \in {\cal H}$. When ${\cal H} = \{ H \}$, we may just write $ex(G,H)$
	instead of $ex(G, {\cal H})$.  
	Erd\"os et al.~\cite{PEMS} related the above extremal numbers of  Turan  with
	general anti-Ramsey numbers by proving the following~\cite{Axenovich}: 
	
	\begin {eqnarray}
	\label {Turan-anti-Ramsey-relation}
	\mbox {For input graph $G$ and pattern graph $H$:}  \nonumber\\
	ex(G, {\cal H}) + 1 \le ar(G,H) \le ex(G,H), \label{eqn:extremal}\\
	\mbox {where } {\cal H} = \{ H - \{e\} : e \in E(H)  \} \nonumber
	\end {eqnarray}
	
	Note that when $H=K_{1,q+1}$, for some integer $q\geq 2$, $ex(G,H)$ refers to a maximum $q$-sub-factor in $G$, and $ex(G,{\cal H})$ refers
	to maximum $(q-1)$-sub-factor in $G$. For a positive integer $t$, let $H_t$ denote a maximum $t$-sub-factor in $G$. Therefore, when $H=K_{1,q+1}$, by the above remark we have $ex(G,H)=|E(H_q)|$ and $ex(G,{\cal H})=|E(H_{q-1})|$. For any integer $q\geq 2$, we then have the following by \eqref{eqn:extremal}: 
	
	\begin {equation}
	|E(H_{q-1})| + 1\leq ar(G,K_{1,q+1}) \leq |E(H_{q})| \label{eqn:maxsubgraph}
	\end {equation}
	~~~~~~
	Note that inequality~\eqref{eqn:maxsubgraph}
	indicates that
	$|E(H_{q-1})|$, the cardinality of a maximum $(q-1)$-sub-factor in $G$ is the most natural parameter
	associated with $ar(G,K_{1,q+1})$.  It is usual in graph theory to  attempt to tighten the
	bounds in terms  of such naturally associated parameters- for example, 
	the Vizing's  theorem was a result of attempts to  tighten the upper bound for the chromatic index in terms of
	a naturally associated lower bound, namely the maximum degree.  Also,
	since a maximum matching, and in general, a maximum $(q-1)$-sub-factor in graphs, can be efficiently computed~\cite{Gabow}, the inequality~\eqref{eqn:maxsubgraph}
	inspires to look for approximation algorithms. The
	sub-factor based algorithm proposed by Feng et. al.~\cite {WFPC,WFLZ,WFLZ2} is one such algorithm. 
	
	\medskip
	
	In the remaining part of the section, we discuss our major contributions in the paper.
	
	\medskip
	
	\noindent {\bf Result 1:} Let $G$ be a graph with minimum degree $\delta \geq q+1$. We then have the following by  Theorem~\ref{thm:factor}(\ref{factor}): 
	\begin{equation*}
		ar(G,K_{1,q+1}) \le \frac{n(q-1)}{2}\bigg(1+\frac{2}{(q-1)(\delta-q+2)}\bigg)
	\end{equation*}
	
	In addition, if $G$ has a $(q-1)$-factor, by \eqref{eqn:maxsubgraph} we then have that :
	
	\begin{equation} \label{eqn:li}
		\frac{n(q-1)}{2}+1\leq ar(G,K_{1,q+1}) \le \frac{n(q-1)}{2}\bigg(1+\frac{2}{(q-1)(\delta-q+2)}\bigg)
	\end{equation}
	
	\medskip
	
	Note that, it follows from a result by Li and Cheng~\cite{Li} that every graph $G$ with number of vertices, $n\geq 4(q-1)-1$ and minimum degree $\delta\geq n/2$ has a $(q-1)$-factor. For such a graph $G$, we can then derive the following by \eqref{eqn:li}.
	\begin{equation}
		\frac{n(q-1)}{2}+1 \leq ar(G,K_{1,q+1}) \le 4+  \frac{n(q-1)}{2} - \frac{4}{q}
	\end{equation}
	
	In particular, for $q=2$ case, we note the following additional observations.
	
	\smallskip
	
	Let $G$ be any graph. If $F$ is a $2$-sub-factor of $G$ (i.e., a collection of paths and
	cycles),  there exists a matching, $M_F$ in $F$ of cardinality at least $|F|/3$, 
	and therefore $|E(F)| \le 3 |M_F|$. By \eqref{eqn:maxsubgraph} we then have the following:
	
	\begin {eqnarray}
	\label {lower-upper-bound-matching}
	|M| + 1 \le    ar(G,K_{1,3}) \le 3 |M|  \label{eqn:matching}\\
	\mbox {where M is a maximum matching of G.} \nonumber 
	\end {eqnarray}
	
	The lower bound of $|M| + 1$ along with Result~1 (for $q=2$), 
	allows us to derive the exact value
	of $ar(G,K_{1,3})$ for graphs of minimum degree  $> n/2$ when $n$ is
	even.
	This is because in graphs of minimum degree  $\delta >n/2$,  there exists a hamiltonian cycle 
	(by Dirac's theorem~\cite {Dirac}) and 
	therefore, a perfect matching if $n$ is even.  Therefore
	we get $n/2 + 1\le ar (G,K_{1,3})  \le n/2 + n/\delta <  n/2 + 2$. Since
	$ar(G,K_{1,3})$ is an integer we infer that $ar(G,K_{1,3}) = n/2+1$. When
	$n$ is odd, we can infer that $ar(G, K_{1,3}) \in \{  \left \lfloor n/2 \right
	\rfloor +  1, \left \lfloor n/2 \right
	\rfloor +  2\}$.

	~~~~~~
	
	\noindent {\bf Result 2:} For a triangle-free graph $G$ with $\delta \geq q+1$  
	we show a better upper bound by  Theorem~\ref{thm:factor}(\ref{triangle-free}):
	
	\begin{equation*}
		ar(G,K_{1,q+1}) \le \frac{n(q-1)}{2}\bigg(1+\frac{1}{(q-1)(\delta-q+1)}\bigg)
	\end{equation*}
	
	Again as in Result~1, in addition, if $G$ has a $(q-1)$-factor, by \eqref{eqn:maxsubgraph} we have that :
	
	\begin{equation*} 
		\frac{n(q-1)}{2}+1\leq ar(G,K_{1,q+1}) \le \frac{n(q-1)}{2}\bigg(1+\frac{1}{(q-1)(\delta-q+2)}\bigg)
	\end{equation*}
	
	~~~~~~~~~

	\noindent {\bf Implication of Result~1 and Result~2  to the approximation
		guarantee of the sub-factor based Algorithm (when $q=2$):} Result~1 implies that, when $q=2$,
	for graphs with a perfect matching, the sub-factor
	based algorithm (Algorithm~\ref{alg:feng}) has an approximation guarantee, $(1 + \frac {2} {\delta})$
	where $\delta$ is the minimum degree.  Recall that Adamaszek and Popa~\cite {AAAP} proved
	an approximation guarantee, $\frac {5}{3}$ for this case.  The approximation
	guarantee proved by us is better than that of Adamaszek and Popa  for 
	$\delta \ge 4$, and equals to theirs for $\delta = 3$.  For triangle-free
	graphs, our approximation guarantee is even better: $(1 + \frac {1} {\delta -1})$.
	This is better than the $5/3$-guarantee of~\cite {AAAP} and the $8/5$-guarantee of~\cite{Chandran} even for $\delta \geq 3$ .
	Note that the analysis of~\cite {AAAP} cannot be specialized for
	the triangle-free case to achieve a better approximation guarantee, as far as
	we understand. Further, as mentioned earlier,  when the minimum degree $> n/2$,
	Algorithm~\ref{alg:feng} is
	almost optimal, in fact,  an additive 1-approximation algorithm for odd $n$,
	and optimal for even $n$.

	~~~~~

	Observe that a 3-approximation guarantee is obvious from~\eqref{lower-upper-bound-matching}. 
	Feng et. al~\cite {WFPC,WFLZ,WFLZ2} proved that the approximation guarantee of the sub-factor based algorithm is
	2, by improving the upper bound of~\eqref{eqn:matching} 
	from $3|M|$ to $2|M|$. 
	Thus we have the following in the place of inequality~\eqref{eqn:matching}:
	
	\begin {eqnarray}
	\label {improved-bounds-matching}
	|M| + 1 \le ar(G,K_{1,3}) \le 2 |M|
	\end {eqnarray}

	Unfortunately, in general, this upper
	bound of $2|M|$ cannot be improved- for example, consider an even cycle:
	It has a perfect matching, whereas $ar(C_n,K_{1,3}) = n$. 
	\footnote {
		If you are curious how a $5/3$-approximation guarantee was proved
		in~\cite{AAAP}, please note that it is not by proving $ar(G,K_{1,3})
		\le  (5/3) |M|$ when $M$ is a perfect matching. They cleverly make use
		of another aspect of the sub-factor based algorithm, namely the number of
		components produced when certain perfect matching is removed. } 
	Looking at examples like cycles, at first, we thought that if we assume a high minimum degree,  we may get a better
	upper bound for $ar(G,K_{1,3})$. That is, a bound in terms of $|M|$ and 
	$\delta$ in the same spirit as that of Result~1.  But soon  we realized that
	if we fix $\delta$, it is possible to  get a graph $G$, such that 
	$ar(G,K_{1,3})$ is very close to  $2|M|$, but just that the cardinality of  its maximum
	matching is correspondingly small  with respect to the number of vertices $n$.
	(See the construction in Section~\ref{sec:tightkappa} and take $\kappa = (\delta +1)/2$.) 
	Thus we realized that we also need to consider another parameter: $\kappa = \frac {n} {2|M|}$. Note that for graphs with a
	perfect matching, $\kappa = 1$, and it increases as the fraction of matched
	vertices decreases. Thus, we also intend to find an upper bound  for $ar(G,K_{1,3})$ in terms
	of $|M|$, which also involves both $\kappa$ and the minimum degree $\delta$.  
	
	~~~~~~~~~~~
	
	\noindent {\bf Result 3:}  For a graph $G$ with a maximum matching $M$,
	$ ar(G,K_{1,3})\le |M| (1+ \frac {2(\kappa+1)} {\delta-1})$ (Theorem~\ref{thm:main}). 
	
	\smallskip
	It is evident that our upper bound is better than $2|M|$ as long as the cardinality of the
	maximum matching is not too small: i.e., as long as
	$\kappa <  (\delta - 3)/2$.  We may also wonder about the cases in which Result~3 has an upper hand over Result~1. It is worth noting that Result~1 (when $q=2$), is not useful
	to get any useful upper bound in terms of $|M|$ when $\kappa$ is even $2$:
	The upper bound implied in terms of maximum matching would be $> 2 |M|$.
	
	~~~~

	\noindent {\bf  Implication of Result 3  to the approximation
		guarantee of the sub-factor based Algorithm (when $q=2$):}
	From Result~3, for the general graphs, we get that
	the sub-factor based algorithm (when $q=2$) has an approximation guarantee, $(1+ \frac {2(\kappa+1)} {\delta-1})$. Here  $\kappa = n/2|M|$, where
	$n$ is the number of vertices in $G$ and $M$ is a maximum matching in $G$.
	Considering the attempts of Adamaszek and Popa to get a better approximation
	guarantee for graphs with a perfect matching, it is natural to consider this
	ratio. The approximation guarantee that we obtained using the parameter $\kappa$ is better than the previously known 2-approximation guarantee for the general graphs (for $q=2$)
	when $\kappa < (\delta - 3)/2$.
	
	\medskip
	
	From the technical looking $(1 +2(\kappa+1)/(\delta-1))$-approximation guarantee,
	we can easily get the following corollaries for some interesting special cases:
	
	\begin {itemize}

	\item  Corollary 1 of Result~3: Let $M$ be a maximum matching in a $d$-regular graph on $n$ vertices. We then have, $|M|\geq \frac{nd/2}{d+1}$. Therefore, $\kappa \leq 1+1/d$. Thus for  $d$-regular graphs, we have the approximation guarantee to be 
	$\big(1+\frac{2(2+1/d)}{d-1}\big)=\big(1 + \frac{2(2d+1)}{d(d-1)}\big)$. 
	This is better than the 2-approximation guarantee known for the
	general case~\cite{WFPC,WFLZ,WFLZ2}, when $d \ge 6$. (Note that all $d$-regular graphs
	need not have a perfect matching and thus 5/3 factor of~\cite {AAAP}  or the $(1 + 2/\delta)=(1+2/d)$-factor
	from our Result~1, is not
	applicable.)   
	
	\item Corollary 2 of Result~3: It is easy to see that $\kappa \le \frac {(\Delta+ \delta) } {\delta}$, where $\Delta$ is the
	maximum degree of $G$. For graphs of minimum degree $ h \sqrt n + 1$, we have an approximation
	guarantee, $(1 + \frac {6} {h^2})$.  To see this note that $ \kappa \le  \bigg(\frac {\Delta + \delta} {\delta}\bigg) \le \bigg(\frac {\Delta} {\delta} + 1\bigg)$.
	Now substituting  $\Delta \le n$ and $\delta \ge h \sqrt n + 1$, we see that the approximation
	guarantee is at most  $(1 + \frac {6} {h^2})$.  For $h > \sqrt 6 $, this is better
	than the 2-approximation guarantee known for the general graphs~\cite{WFPC,WFLZ,WFLZ2}.
	\end {itemize}
	
	Note that, for $q\geq 2$, a maximum $(q-1)$-sub-factor in $G$ is a natural generalization of a maximum matching in $G$. Also, as in Result~1 and 2, for $q\geq 2$, analogous to a perfect matching in $G$, we can consider a $(q-1)$-factor in $G$. In the following result (which is a generalization of Result~3), we set the parameter, $\kappa= \frac{n(q-1)}{2|E(H_{q-1})|}$, where $|E(H_{q-1})|$ is the number of edges in a maximum $(q-1)$-sub-factor of the given graph $G$. Note that, as any $(q-1)$-factor in a graph on $n$ vertices is of size $\frac{n(q-1)}{2}$, we can observe that for graphs with a $(q-1)$-factor, we have
	$\kappa = 1$, and it increases as the fraction of the vertices belonging to the $(q-1)$-factor decreases. 
	
	\medskip
	\noindent {\bf Result~$3^\prime$:} For general graphs
	we have the following upper bound: (by Theorem~\ref{thm:main} and~\ref{thm:main2}) 
	\begin{equation*}
		ar(G,K_{1,q+1}) \le|E({H_{q-1})}|\bigg( 1+ \frac {2\big(\kappa+\frac{1}{(q-1)}\big)} {\delta-1}\bigg), 
	\end{equation*}
	where $|E(H_{q-1})|$ is the number of edges in a maximum $(q-1)$-sub-factor of $G$ and $\kappa= \frac{n(q-1)}{2|E(H_{q-1})|}$. 
	
	\medskip
	
	We then have the following implication for a maximum edge $q$-coloring problem for each integer $q>2$. i.e. when $\kappa < \frac{1}{(q-1)}\bigg(\frac{(2q-1)(\delta-1)}{(3q-2)}-1\bigg)$, the bound provided in this result improves the previously known approximation guarantee, $\bigg(1+\frac{4q-2}{3q^2-5q+2}\bigg)$ given by Feng et.al.

	~~~~~~~~~

	\section{Notation and Preliminaries}
	
	Throughout this paper, we consider connected graphs with minimum degree
	$\delta\geq q+1$, where $q\geq 2$. An edge $q$-coloring of a graph $G$ with $c$ colors is a
	map ${\cal C}: E(G)\rightarrow [c]$, such that a vertex is incident with
	edges of at most $q$ colors. Let ${\rm ALG}(G)$ denote the number of colors in the
	coloring returned by
	Algorithm~\ref{alg:feng} for the graph $G$, and let ${\rm OPT}(G)$ be the
	number of colors in a maximum edge $q$-coloring of $G$. 
	
	\subsection{Characteristic subgraph for edge $q$-coloring}
	
	Let ${\cal C}$ be an edge $q$-coloring of the graph $G$. We are interested in subgraphs of $G$
	containing exactly one edge of each color in ${\cal C}$. Note that the maximum degree of such a subgraph is at
	most $q$.
	
	\smallskip
	
	\noindent {\bf Characteristic Subgraph:}  
	We define the {\em characteristic subgraph} of $G$ with respect to a coloring ${\cal C}$
	as a subgraph of $G$ containing exactly one edge of each color in ${\cal C}$. Let ${\cal C}$ be an optimal edge $q$-coloring of $G$, and let $\chi$ be a
	characteristic subgraph of $G$ with respect to the coloring ${\cal C}$. For $i\in \{0,1,2,
	\ldots,q\}$, let
	$N_i$ denote the vertices of $G$ with degree $i$ in
	$\chi$ and let $n_i=|N_i|$. Clearly $n_0+n_1+n_2+\ldots+n_q=n$.
	
	\section{Graphs with a $(q-1)$-factor}
	In this section, we first derive an upper bound for $ar(G,K_{1,q+1})$, where $q\geq 2$. As a consequence, we obtain some useful bounds for the approximation guarantee of the sub-factor based algorithm, when it is applied to graphs with a $(q-1)$-factor. The proofs here also help to illustrate key ideas in a simpler setting, which will be extended further in the upcoming sections. 
	
	\begin{theorem}\label{thm:factor}
		For a fixed integer $q\geq 2$, let $G$ be an $n$-vertex graph with a $(q-1)$-factor and have minimum degree $\delta(G)=\delta \ge q+1$. We then have:
		\begin{enumerate}[{\rm (i)}]
			\item \label{factor}   $ar(G,K_{1,q+1}) \le \frac{n(q-1)}{2}\bigg(1+\frac{2}{(q-1)(\delta-q+2)}\bigg)$
			
			\item \label{triangle-free}  Further, if $G$ is also triangle-free, we have $ar(G,K_{1,q+1}) \le \frac{n(q-1)}{2}\bigg(1+\frac{1}{(q-1)(\delta-q+1)}\bigg)$.
			
		\end{enumerate}
	\end{theorem}
	\begin{proof}[Proof of (\ref{factor})]
		Let ${\cal C}$ be an optimal edge $q$-coloring of $G$ using, say $c$ colors. We then have, $c=ar(G,K_{1,q+1})$. Let $\chi$ be a
		characteristic subgraph of $G$ with respect to the coloring ${\cal C}$. Recall that for
		$i\in \{0,1,2,
		\ldots,q\}$, $N_i$ denote the vertices of $G$ with degree $i$ in
		$\chi$ and $n_i=|N_i|$. We also have $n_0+n_1+n_2+\ldots+n_q=n$. 
		
		\medskip
		
		Note that the number of colors, $c$ used in the coloring ${\cal C}$
		is the same as the number of edges in $\chi$ and is given by:
		\begin{align}
			c & = \frac{qn_{q}+(q-1)n_{q-1}+\cdots+2n_{2}+n_{1}}{2} \nonumber\\
			& = \frac{(q-1)(n_{q}+n_{q-1}+\cdots+n_{2}+n_{1}+n_{0})+ (n_{q}-n_{q-2}-2n_{q-3}-3n_{q-4}-\cdots-(q-1)n_{0})}{2}\nonumber\\ 
			& = \frac{n(q-1)}{2}+ \frac{n_{q}-n_{q-2}-2n_{q-3}-3n_{q-4}-\cdots-(q-1)n_{0}}{2}  \label{eqn:c2}
		\end{align}
		For $v\in N_q$, let $N'(v)$ denote the neighbors of $v$ in $G$ through
		edges not in $\chi$. Clearly $|N'(v)|\geq \delta-q$ for each $v\in N_q$.
		
		\medskip

		\begin{myclaim}
			For $u,v\in N_q$, $v \notin N'(u)$. 
		\end{myclaim}
		
		\smallskip
		\noindent {\it Proof of Claim:} Let ${\cal
			C}_u$ and ${\cal C}_v$ be the set of colors incident at vertices $u$ and $v$
		respectively. If $v  \in N'(u)$ then  the color of $uv$ belongs to
		${\cal C}_u\cap {\cal C}_v$.  Since $u,v\in N_q$, by the definition of the characteristic graph, we have that
		if $uv$ is not an edge in $\chi$,  then $C_u \cap C_v = \emptyset$. 
		This implies that the edge $uv$ cannot get a color if $v \in N'(u)$. Thus we infer
		that $v \notin N'(u)$.
		\medskip
		
		By the above claim, it follows that for each $v\in
		N_q$, we have $N'(v)\subseteq N_0\cup N_1\cup\ldots\cup N_{q-1}$. Consider the bipartite graph $H$ with bipartition $N_q\uplus (N_0\cup
		N_1\cup\ldots\cup N_{q-1})$ and edge set of $H$, $E(H)$ given by $E(H) := \cup_{v\in N_q}
		E(v,N'(v))$.
		Let $i\in \{0,1,2,\ldots,q-1\}$. We now show that $d_H(u)\leq 2(q-i)$ for each vertex $u\in N_i$. Consider a vertex $u\in N_i$. Now, by the definition of $N_i$ and $H$, there are at most $q-i$ colors, say $a_1,a_2,
		\ldots, a_{q-i}$ in ${\cal C}_u$,
		but not incident to $u$ in $\chi$. Let $w$ be a neighbor of $u$ in $H$. Then 
		we must have that $a_j\in {\cal C}_w$ for some $j\in \{1,2,\ldots,q-i\}$. Since there are at
		most two vertices, say $v,w\in N_q$ such that $a_j\in {\cal C}_v\cap {\cal C}_w$, it follows that
		$u$ has at most $2(q-i)$ neighbors in $H$. i.e. $d_H(u)\leq 2(q-i)$ for each $u\in N_i$.
		
		\medskip
		Now by counting the edges across the bipartition, $N_{q}\uplus (N_{0}\cup N_{1}\cup N_{2}\cup \ldots \cup N_{(q-1)})$ in two ways, we have:\\
		\begin{align}
			n_{q}(\delta - q) & \leq 2qn_{0}+2(q-1)n_{1}+2(q-2)n_{2}+\cdots+2n_{q-1} \label{eqn:degree}\\
			n_{q}(\delta - q+2) & \leq (2n_{q}+ 2n_{q-1}+2n_{q-2}+\cdots+2n_{0})+2n_{q-2}+4n_{q-3}+\cdots+2(q-1)n_{0} \nonumber\\
			& \leq 2n+ 2(n_{q-2}+2n_{q-3}+\cdots+(q-1)n_{0}) \nonumber\\
			\frac{n_{q}(\delta-q+2)}{2} & \leq n+ (n_{q-2}+2n_{q-3}+\cdots+(q-1)n_{0}) \label{eqn:n_q}
		\end{align}
		Note that,
		\begin{align*}
			\frac{(\delta-q+2)(n_{q}-n_{q-2}-2n_{q-3}-3n_{q-4}-\cdots -(q-1)n_{0})}{2} &\leq n_{q}\bigg(\frac{\delta-q+2}{2}\bigg)-(n_{q-2}+2n_{q-3}+3n_{q-4}+ \\&\cdots+(q-1)n_{0})\\ &\leq n \hspace{.5cm} \text{ (since }\delta\geq q+1  \text{ and by equation }~(\ref{eqn:n_q}))
		\end{align*}
		Therefore, we have,
		\begin{align}
			\frac{n_{q}-n_{q-2}-2n_{q-3}-3n_{q-4}-\cdots-(q-1)n_{0}}{2} & \leq \frac{n}{(\delta-q+2)}  \label{eqn:c3}
		\end{align}
		Now by equations (\ref{eqn:c2}) and (\ref{eqn:c3}), we have:
		\begin{align*}
			c  & \leq \frac{n(q-1)}{2} + \frac{n}{\delta-q+2}\\
			& \leq \frac{n(q-1)}{2}\bigg(1 + \frac{2}{(q-1)(\delta-q+2)}\bigg)\\
		\end{align*}
		Since $c=ar(G,K_{1,q+1})$, we then have $ar(G,K_{1,q+1})\le \frac{n(q-1)}{2}\bigg(1+\frac{2}{(q-1)(\delta-q+2)}\bigg)$. This proves Part~(i) of the theorem.

		\medskip
		
		\noindent \textit{Proof of (\ref{triangle-free})}. Let $i\in \{0,1,2,\ldots,q-1\}$ and $u\in N_i$. As in Part~(i), by the definition of $N_i$ and $H$, we know that there are at most $q-i$ colors available at $u$ to color the edges incident to $u$ in $H$. Since here, the graph $G$ is further assumed to be triangle-free, we
		now show that $d_H(u)\leq (q-i)$. This can be achieved by proving that the vertex $u$ can have at most one edge of a given
		color incident on it in $H$. Let $a$ be one of
		the colors incident to $u$ in ${\cal C}$. For the sake of contradiction, assume that there exist distinct vertices $x,y\in N_q$ such that $ux,uy\in E(H)$ and both the edges $ux$ and $uy$ are colored $a$. Since $x,y\in N_q$ and $a\in {\cal C}_x\cap
		{\cal C}_y$, we then have that $xy$ is an edge in $\chi$. This is a contradiction to the fact that $G$ is triangle-free.
		Therefore, we can conclude that for each $u\in N_i$, where $i\in \{0,1,2,\ldots,q-1\}$, $d_H(u)\leq q-i$. Thus, instead of equation~(\ref{eqn:degree}) in Part~(i), we now have the following:
		\begin{align}
			n_{q}(\delta - q) & \leq qn_{0}+(q-1)n_{1}+(q-2)n_{2}+\cdots+n_{q-1} \nonumber\\\
			n_{q}(\delta - q+1) & \leq (n_{q}+ n_{q-1}+n_{q-2}+\cdots+n_{0})+n_{q-2}+2n_{q-3}+\cdots+(q-1)n_{0}\nonumber\\
			n_{q}(\delta-q+1) & \leq n+ (n_{q-2}+2n_{q-3}+\cdots+(q-1)n_{0}) \label{eqn:n_q_triangle}
		\end{align}
		
		Making similar substitutions as before, we now have,
		\begin{align*}
			(\delta-q+1)(n_{q}-n_{q-2}-2n_{q-3}-3n_{q-4}-\cdots-(q-1)n_{0}) &\leq n_{q}(\delta-q+1)-(n_{q-2}+2n_{q-3}+3n_{q-4}+\\ &\cdots+(q-1)n_{0})\\ &\leq n \hspace{.5cm} \text{ (since }\delta\geq q+1  \text{ and by equation}~(\ref{eqn:n_q_triangle}))
		\end{align*}
		
		Therefore, we have,
		\begin{align}
			n_{q}-n_{q-2}-2n_{q-3}-3n_{q-4}-\cdots-(q-1)n_{0} & \leq \frac{n}{(\delta-q+1)} \label{eqn:c3triangle}
		\end{align}
		Now by equations (\ref{eqn:c2}) and (\ref{eqn:c3triangle}), we have:
		\begin{align*}
			c  & \leq \frac{n(q-1)}{2} + \frac{n}{2(\delta-q+1)}\\
			& \leq \frac{n(q-1)}{2}\bigg(1 + \frac{1}{(q-1)(\delta-q+1)}\bigg)\\
		\end{align*}
		Thus, we can conclude that, $ar(G,K_{1,q+1})\leq \frac{n(q-1)}{2}\bigg(1 + \frac{1}{(q-1)(\delta-q+1)}\bigg)$. This proves Part~(ii) of the theorem.
	\end{proof}
	\begin{corollary}\label{corr:factor}
		For a fixed integer $q\geq 2$, let $G$ be an $n$-vertex graph with a $(q-1)$-factor and with minimum degree,
		$\delta(G) = \delta \ge q+1$. We then have:
		\begin{enumerate}[{\rm (i)}]
			\item  ${\rm OPT}(G)\leq \bigg(1 + \frac{2}{(q-1)(\delta-q+2)}\bigg)\cdot{\rm ALG}(G)$.
			
			\item  If $G$ is triangle-free,  then ${\rm OPT}(G)\leq  \bigg(1 + \frac{1}{(q-1)(\delta-q+1)}\bigg)\cdot{\rm ALG}(G)$.
			
		\end{enumerate}
	\end{corollary}
	\begin{proof}
		Since $G$ has a $(q-1)$-factor, we have ${\rm ALG}(G)\geq \frac{n(q-1)}{2}$. Now the corollary is immediate from Theorem~\ref{thm:factor} and the fact that ${\rm OPT}(G)=ar(G,K_{1,q+1})$.
	\end{proof}
	
	\medskip 
	\begin{corollary}
		For the case $q=2$, ${\rm OPT}(G)\leq {\rm ALG}(G)+1$ when $\delta\geq \lfloor n/2
		\rfloor$. Moreover, if $n$ is even and $\delta > n/2$, then Algorithm~\ref{alg:feng} (for $q=2$) is optimal, i.e.  ${\rm ALG}(G) = {\rm OPT} (G)$.  
	\end{corollary}
	\begin{proof}
		Observe that by Dirac's theorem, a graph $G$ with
		$\delta \ge  \lfloor n/2 \rfloor$ has a hamiltonian cycle, and hence a maximum
		matching of size $\lfloor n/2 \rfloor$. Thus ${\rm ALG}(G)\geq \lfloor n/2
		\rfloor + 1$.  Further from Part~(i),  we have $c\leq n/2 + n/\delta <
		\lfloor n/2 \rfloor +
		3$. Thus, since $c$ is an integer we get 
		$c\leq \lfloor n/2 \rfloor + 2\leq {\rm ALG}(G)+1$. Note that if we assume
		that $n$ is even and $\delta > n/2$, this proves that Algorithm~\ref{alg:feng}
		is optimal in this case.
		
	\end{proof}
	\subsection{Tight example for Theorem~\ref{thm:factor}}
	In this section, we propose a construction of a family of graphs for which the sub-factor based algorithm can end up with  an approximation guarantee almost equal to the expressions obtained in Theorem~\ref{thm:factor} and Corrollary~\ref{corr:factor}. This shows that the expressions for the approximation guarantee that we provided in Theorem~\ref{thm:factor} and Corrollary~\ref{corr:factor} are almost tight.
	To describe the construction, we need the following concept.
	\medskip
	
	\noindent{\bf The degree-preserving lift  $L(G)$ of a $d$-regular graph $G$:}
	Let $G$ be a $d$-regular graph. For each vertex $v\in V(G)$ there are exactly $d$ edges, say $e_v^1,e_v^2,\ldots, e_v^d$ incident at $v$. Now, for each vertex $v\in V(G)$, we introduce $d$ vertices, say $v_1,v_2,\ldots,v_d$, where the vertex $v_i$ ``represents'' the edge $e_v^i$ incident at $v$. We then construct the graph $L(G)$ from $G$ by replacing each vertex $v$ in $G$ by a clique, say $K_v$ on these $d$ vertices, $v_1,v_2,\ldots,v_d$. Clearly, in $L(G)$, all the vertices $v_i$ has $d-1$ neighbours in its clique  $K_v$. Further, for $u\neq v$, two vertices $u_i\in K_u$ and $v_j\in K_v$ are connected by an edge $u_iv_j$ in $L(G)$, if and only if $uv \in E(G)$ and  both $u_i$ and $v_i$ ``represents'' the same edge, i.e. $e_u^i=e_v^j= uv$.  Thus apart from the $d-1$ neighbours
	from inside its clique $K_v$, the vertex $v_i$ gets exactly one more neighbour from outside $K_v$.
	\emph {Thus it is easy to verify that $L(G)$ is again a $d$-regular graph, whose $n'=nd$ vertices are partitioned into $n= \frac{n'} {d}$ cliques,  each on $d$ vertices,  such that each vertex in any of these cliques, say $K$  has exactly one neighbor in $V(L(G))\setminus K$. }

	\medskip
	
	Given a fixed integer $q\geq 2$, let $d\geq 2q+1$ be any integer such that $d-q+2$ is even. Let $G$ be any $d$-regular graph on $n$ vertices. Let $G_0=G$. For each $i\in \{1,2,\ldots,q-1\}$, we will recursively construct a graph $G_i$ with $|V(G_i)|=n_i$ satisfying the following property: 
	\begin{multline} \label{graphproperty}
		V(G_i) \mbox{ can be partitioned into } l_i=\frac{n_i}{d-i+1} \mbox{ cliques, say } K_1,K_2,\ldots, K_{l_{i}},  \mbox{ where each clique } K_j,\\  j\in \{1,2,\ldots,l_i\} \mbox{ is of size } d-i+1, \mbox{  and each vertex } v\in K_j \mbox{ has } i \mbox{ neighbors in } V(G_i)\setminus K_j.  
	\end{multline}
	
	Note that the graphs $G_i$ satisfying the Property~\eqref{graphproperty}  is a $d$-regular graph, since any vertex $v\in K_j$ has $d-i$ neighbors in $K_j$ and $i$ neighbors in $V(G_i)\setminus K_j$. We call  the edges of $G_i$ inside a clique  $K_j, 1 \le j \le l_i $  as  \emph {clique-edges}
	and the remaining edges (having end points in two different cliques) as \emph {cross edges.} Thus at each vertex of $G_i$, exactly $(d-i)$ clique edges and $i$ cross edges are incident.
	
	\medskip
	
	Now, for each $i\in \{1,2,\ldots,q-1\}$, we define the graph $G_i$ as follows:
	
	\smallskip
	
	Define $G_1 = L(G_0)$, the degree-preserving lift of $G_0$, where $G_0=G$.  
	It is clear that due to the properties of degree preserving lift,  $G_1$ satisfies Property~\eqref{graphproperty}.
	
	\smallskip
	
	For $i>1$, we recursively define the graph $G_i$ as follows: We can assume that $V(G_{i-1})$ satisfies Property~\eqref{graphproperty}. i.e. $V(G_{i-1})$ can be partitioned into $l_{i-1}=\frac{n_{i-1}}{d-i+2}$ cliques, say $K_1,K_2,\ldots, K_{l_{i-1}}$. For any $j\in \{1,2,\ldots,l_{i-1}\}$, we also have that $K_j$ is a clique on $d-i+2$ vertices, and for each vertex $v\in K_j$, $v$ has exactly $i-1$ neighbors in $V(G_{i-1})\setminus K_j$. Now to obtain the graph $G_i$, we first replace each clique $K_j$ in $G_{i-1}$ by its degree-preserving lift $L(K_j)$, where $j\in \{1,2,\ldots,l_{i-1}\}$.  In effect, each vertex $v$ of $G_{i-1}$ is replaced by a 
	$d-i+1$ sized clique $K'_v$ consisting of vertices, say $v_1,v_2, \ldots, v_{d-i+1}$.
	The vertex set of $G_i$ is the union of the vertex sets of all these $n_{i-1}$ cliques.  Clearly,
	$n_i = n_{i-1}. (d-i+1)$ and $V(G_i)$ has a trivial partition into cliques of order $d-i+1$,
	namely $\{ K'_v:  v \in V(G_{i-1}) \}$.  Now for each cross edge $uv$ of $G_{i-1}$ we add a
	perfect matching between $K'_v$ and $K'_u$ by making $v_j$ adjacent to $u_j$ for $1 \le j \le d-i+1$.
	Since there are exactly $i-1$ cross edges incident on each vertex $v$ in $G_{i-1}$, 
	each $v_j$ ($1 \le j \le d-i+1$) inherits $i-1$ cross edges in $G_i$ from the corresponding cross
	edges of $G_{i-1}$ incident on $v$. In addition to that, inside $L(K_j)$  (where $v \in K_j$) there is one cross edge incident at each $v_j$, along with the $d-i$ clique edges from the clique $K'_v$. Thus at a vertex $v_j$ of $G_i$ there are exactly $i$ cross edges and $d-i$ clique edges. 
	Now it is easy to see that  $G_i$  satisfies Property~\eqref{graphproperty}, for each $i\in \{1,2,\ldots,q-1\}$.


	Consider the graph $G_{q-1}$. Let $K_1, K_2,\ldots, K_{l_{q-1}}$ denote the partition of $V(G_{q-1})$ into cliques, where $l_{q-1}=\frac{n_{q-1}}{d-q+2}$. Since $G_{q-1}$ satisfies Property~\eqref{graphproperty}, we have that each vertex in any clique $K_j$, where $j\in \{1,2,\ldots,l_{q-1}\}$ has $q-1$ neighbors outside the clique. This implies that the set of all cross edges $M$ of $G_{q-1}$  form a $(q-1)$-regular subgraph of $G_{q-1}$. More precisely, $M=\{uv\in E(G_{q-1}):u\in K_i, v\in K_j$, where $i\neq j$, $i,j\in \{1,2,\ldots,l_{q-1}\}\}$. But it is easy to see that removing all the $\frac{n_{q-1}(q-1)}{2}$ edges of the subgraph $M$ of $G_{q-1}$ leaves $l_{q-1}=\frac{n_{q-1}}{d-q+2}$ connected components namely, $K_1,K_2,\ldots, K_{l_{q-1}}$. Coloring $\frac{n_{q-1}(q-1)}{2}$ edges in the subgraph $M$ with $\frac{n_{q-1}(q-1)}{2}$ distinct colors and coloring the edges of each of the $l_{q-1}=\frac{n_{q-1}}{d-q+2}$ components, with a new color yields a $q$-coloring using $\frac{n_{q-1}(q-1)}{2}+\frac{n_{q-1}}{d-q+2}$ colors. This
	implies that for the graph $G_{q-1}$, $\mathrm{OPT}(G_{q-1}) \ge \frac{n_{q-1}(q-1)}{2}+\frac{n_{q-1}}{d-q+2}$.
	\medskip
	
	On the other hand, it is easy to see that $M$ is not the only $(q-1)$-regular subgraph available in the graph $G_{q-1}$. There is another simple way to get a $(q-1)$-regular subgraph of $G_{q-1}$ as follows: Since $d\geq 2q+1$ and $d-q+2$ is even, from each clique $K_j$, where $j\in \{1,2,\ldots,l_{q-1}\}$, we can choose a $(q-1)$-regular subgraph, say $C_j$ using only the clique edges in such a way that $K_j\setminus C_j$ remains connected. The union of all these $(q-1)$-regular subgraphs, say $M'=\bigcup_{i=1}^j C_j$ is again a $(q-1)$-regular subgraph of $G_{q-1}$. Note that Algorithm~\ref{alg:feng} when applied to $G_{q-1}$, picks up an arbitrary maximum $(q-1)$-sub-factor of $G_{q-1}$, colors its edges with distinct colors, and then gives new distinct colors to each of the connected components that results when the chosen maximum $(q-1)$-sub-factor is removed from $G_{q-1}$. Note that both the subgraphs $M$ and $M'$ defined above are maximum $(q-1)$-sub-factors of $G_{q-1}$. Suppose that the Algorithm~\ref{alg:feng} picks up $M'$ instead of $M$ to start with. It is obvious that if $M'$ is removed from $G_{q-1}$, the resulting graph has only one connected component in it. Therefore, Algorithm~\ref{alg:feng} may yield a $q$- edge coloring of size $\frac{n_{q-1}(q-1)}{2}+1$. 
	
	\medskip
	
	Thus for $G_{q-1}$, 
	
	\begin{align*}
		\frac{{\rm OPT}(G_{q-1})}{{\rm ALG}(G_{q-1})}&\geq \frac{\frac{n_{q-1}(q-1)}{2}+ \frac{n_{q-1}}{d-q+2}}{\frac{n_{q-1}(q-1)}{2}+1}\\
		&=\frac{\big(\frac{n_{q-1}(q-1)}{2}+1\big)\bigg(1+\frac{2}{(q-1)(d-q+2)}-\frac{1+\frac{2}{(q-1)(d-q+2)}}{\frac{n_{q-1}(q-1)}{2}+1}\bigg)}{\frac{n_{q-1}(q-1)}{2}+1}\\
		&=\bigg(1+\frac{2}{(q-1)(d-q+2)}-\frac{1+\frac{2}{(q-1)(d-q+2)}}{\frac{n_{q-1}(q-1)}{2}+1}\bigg)
	\end{align*}
	The above ratio is very close to $\bigg(1+\frac{2}{(q-1)(d-q+2)}\bigg)$ for large $n_{q-1}$.
	
	\begin{remark}
		Note that in the proof of Theorem~\ref{thm:factor}, we bound the number of
		edges in the characteristic subgraph (which is the same as the number of colors
		in the coloring) by $\frac{n(q-1)}{2}+ \frac{n_{q}-n_{q-2}-2n_{q-3}-3n_{q-4}-\cdots-(q-1)n_{0}}{2}$. In the case of graphs with a $(q-1)$ factor, $\frac{n(q-1)}{2}+1$ is a lower bound on the number of colors returned by 
		Algorithm \ref{alg:feng}, and thus intuitively, the term $\frac{n_{q}-n_{q-2}-2n_{q-3}-3n_{q-4}-\cdots-(q-1)n_{0}}{2}$ was
		the ``excess''. We tried only  to tackle  this excess in the previous proof leaving the first part (i.e. $\frac {n(q-1)}{2}$, untouched). However, in the \emph {general case,}  $\frac {n(q-1)}{2}$ could be a gross
		overestimate of the number of edges in  a maximum $(q-1)$-sub-factor, so the previous strategy does not give any
		approximation guarantee. Therefore, to deal with general graphs, in the upcoming sections, we introduce a new parameter, $\kappa = \frac{n(q-1)}{2|E(H_{q-1})|}$, where $|E(H_{q-1})|$ is the number of edges in a maximum $(q-1)$-sub-factor of the given graph $G$.  
	\end{remark}
	\section{ Result for general graphs when $q=2$ } \label{sec:q=2}
	In this section, we  deal with the case $q=2$, and derive a bound for general graphs in terms of $\kappa$. Refer Section~\ref{results}, for the motivation behind the choice of parameter $\kappa$. First, we have the following lemma. 
	\begin{lemma}\label{lem:chargraph}
		For a graph $G$ with minimum degree, $\delta(G)\geq 3$,  and an edge $2$-coloring ${\cal C}$ of $G$, there exists a characteristic
		subgraph $\chi$ of $G$  such that $\chi$ 
		is a disjoint union of paths. 
	\end{lemma}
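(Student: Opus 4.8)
The plan is to start from an arbitrary characteristic subgraph and iteratively destroy cycles. Since a characteristic subgraph has maximum degree at most two, every connected component is either a path or a cycle, so it suffices to show that cycles can be eliminated one at a time without changing the number of colors (which equals the number of edges, one per color). Suppose $\chi$ is a characteristic subgraph containing a cycle $Z$, and let $e = uv$ be an edge of $Z$ with color $\alpha = {\cal C}(e)$. The idea is to find another edge $e'$ of color $\alpha$ in $G$ that is \emph{not} on any cycle of $\chi$, and swap: replace $e$ by $e'$ in the edge set. The resulting subgraph still contains exactly one edge of each color, so it is again characteristic, but now the cycle $Z$ has been broken into a path (and we must check no new cycle is created).

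First I would handle the case where the color class of $\alpha$ in $G$ consists only of edges lying on cycles of $\chi$ — more carefully, I would argue that we may always pick the cycle $Z$ and the edge $e$ so that some edge $e'$ of the same color lies outside all cycles. The cleanest route: consider a cycle $Z$ in $\chi$ and walk along it; at each vertex $w$ of $Z$, the two edges of $Z$ at $w$ use at most two colors, and since $w$ has degree at least $\delta \ge 3$ in $G$, if the two $Z$-edges at $w$ already use two distinct colors then every other edge at $w$ repeats one of those two colors, giving a candidate $e'$; if instead both $Z$-edges at $w$ have the same color, then again any third edge at $w$ has a color already present. In either case there is an edge $e'$ incident to $Z$ whose color already appears on $Z$. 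Then I would swap the $Z$-edge carrying that color for $e'$. The key point is that $e'$ has at least one endpoint of degree $\le 2$ in $\chi$ after removing the old edge (or we can choose $e'$ to avoid closing a new cycle), so the swap strictly decreases the number of edges lying on cycles.

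The main obstacle is showing the swap does not create a new cycle — i.e.\ that $e'$ can be chosen so that adding it to $\chi \setminus \{e\}$ keeps max degree $\le 2$ and introduces no cycle. I would address this by a careful choice: after deleting $e = uv$ from $\chi$, both $u$ and $v$ have degree $\le 1$ in $\chi \setminus \{e\}$; I want $e'$ to be incident to $u$ or $v$ (say $u$), with its other endpoint $x$ also of degree $\le 1$, and with $u$ and $x$ not already in a common path component. Since $u$ lies on the (now broken) cycle $Z$, the component of $u$ in $\chi\setminus\{e\}$ is a path ending at $u$; if $x$ is the other end of that same path we would close the cycle again, so I instead pick $e'$ at a vertex of $Z$ where the repeated color points to an edge leaving the path, which is possible because $Z$ has length $\ge 3$ and the degree condition $\delta \ge 3$ supplies enough edges. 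Iterating this process reduces the number of cycles by one each time and terminates, yielding a characteristic subgraph $\chi$ that is a disjoint union of paths.
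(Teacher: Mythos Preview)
Your approach is the same as the paper's: at a vertex $u$ on a cycle of $\chi$, use $\delta \ge 3$ to find a non-$\chi$ edge $e' = uz$, observe that its color must be one of the two colors already carried by the $\chi$-edges at $u$, and swap $e'$ for the $\chi$-edge of that color. The paper packages this as an extremal argument (take $\chi$ with the minimum number of cycles and derive a contradiction from a single swap), whereas you iterate; that difference is only cosmetic.

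Where your write-up has a genuine gap is the ``no new cycle'' step. You say you want the other endpoint $x$ of $e'$ to have degree $\le 1$ in $\chi$, and that if the obvious choice closes the cycle again you will ``pick $e'$ at a vertex of $Z$ where the repeated color points to an edge leaving the path, which is possible because $Z$ has length $\ge 3$ and $\delta \ge 3$'' --- but that sentence is an assertion, not an argument. In fact no case analysis or re-picking is needed: \emph{every} choice of $e' = uz$ with $z \notin \{v,w\}$ works automatically. If such a $z$ had degree $2$ in $\chi$, the two colors visible at $z$ would be precisely the colors of its two $\chi$-edges; but each color occurs in exactly one edge of $\chi$, and the $\chi$-edges carrying $u$'s two colors are $uv$ and $uw$, neither incident to $z$. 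Then $uz$ could not be colored with either of $u$'s colors, a contradiction. Hence $z$ has degree $\le 1$ in $\chi$, so it is not on $Z$ at all and lies in a different component of $\chi$; after deleting $uv$ and adding $uz$ you merely merge that component with the broken cycle, and no new cycle can appear. With this one-line color observation your plan goes through cleanly.
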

	\begin{proof}
		Let $\chi$ be a characteristic subgraph of $G$ (i.e. a subgraph of $G$ that contains exactly one of each color in ${\cal C}$), but
		with the minimum possible number of cycles. We claim
		that then  $\chi$ has no cycles. For the sake of contradiction, suppose that $\chi$ has a
		cycle. Let $u$ be one of the vertices in the cycle, and $v,w$ be its 
		two neighbors in the cycle. Since $\delta(G) \geq 3$,
		there is a  neighbor $z$ of $u$ in $G$, where $z \not \in \{v,w\}$.
		Now the edge $uz$ must have the same color as $uv$ or
		$uw$, say $uv$. It follows that $z$  
		is  incident with at most one  edge in
		$\chi$, i.e.  $degree_{\chi} (z) \le 1$, since otherwise the number of distinct colors incident at vertex $z$ is greater than or equal to 3,
		contradicting the assumption.  
		Then $\chi-uv+uz$ is a characteristic subgraph with a smaller
		number of cycles, violating the assumption of minimality. 
		Hence, $\chi$ is acyclic and is a disjoint union of paths.  
	\end{proof}
	In view of Lemma~\ref{lem:chargraph}, in this section, we can assume that $\chi$ is an acyclic subgraph of $G$ and the components of the characteristic
	subgraph (which are paths) will be called {\em characteristic paths}.
	We now have the following theorem.
	
	\begin{theorem}\label{thm:main}
		Let $G$ be an $n$-vertex  connected graph with minimum degree, $\delta(G)=\delta\geq 3$. Let $M$
		be a maximum matching of $G$. Then $ar(G,K_{1,3})\le |M| \big(1+ \frac {2(\kappa+1)} {\delta-1}\big)$, where $\kappa=n/2|M|$ being the ratio of
		number of vertices to the size of a maximum matching of the graph.
	\end{theorem}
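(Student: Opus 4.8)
The plan is to mimic the structure of the proof of Theorem~\ref{thm:perfect}, but without the luxury of a perfect matching, so the counting must be done relative to the maximum matching $M$ rather than to all of $V(G)$. As before, let $\mathcal{C}$ be an optimal edge $2$-coloring of $G$, let $\chi$ be a characteristic subgraph which (by Lemma~\ref{lem:chargraph}) is a disjoint union of paths, and let $N_i$, $i\in\{0,1,2\}$, be the sets of vertices of degree $i$ in $\chi$, with $n_i=|N_i|$. Then $c=\mathrm{OPT}(G)=(2n_2+n_1)/2=n_2 + n_1/2$, and on the algorithm's side $\mathrm{ALG}(G)\ge |M|+1 \ge |M|$, so we need an upper bound on $c$ in terms of $|M|$, $\delta$ and $n$. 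The key local structural fact carries over verbatim: for $u,v\in N_2$ we still have $v\notin N'(u)$ (where $N'(v)$ are the neighbors of $v$ via non-$\chi$ edges), because two distinct $N_2$-vertices joined by a non-$\chi$ edge could not consistently color that edge. Hence $N'(v)\subseteq N_0\cup N_1$ for every $v\in N_2$, each such $v$ contributes at least $\delta-2$ edges, and the same degree bounds $d_H(u)\le 4$ on $N_0$ and $d_H(u)\le 2$ on $N_1$ hold in the auxiliary bipartite graph $H$ with parts $N_2$ and $N_0\cup N_1$. This yields again
\begin{equation}\label{eq:maincount}
n_2(\delta-2)\le 4n_0+2n_1.
\end{equation}

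The new ingredient is to bring in maximality of $M$. The set $V(M)$ of matched vertices is a vertex cover, so every edge of $G$ — in particular every edge of $\chi$ — meets $V(M)$. I would use this to bound $n_1+2n_2$ (roughly, the number of $\chi$-edge-endpoints, counted with multiplicity) or, more directly, to bound the number of \emph{unmatched} vertices: $n - |V(M)| = n - 2|M|$ of the vertices are unmatched, and an unmatched vertex has all $\delta$ of its neighbors in $V(M)$. The cleanest route is probably to observe that $n_0+n_1+n_2=n$ as before, combine this with \eqref{eq:maincount} (which after substituting $n_0+n_1+n_2=n$ gives $n_2\delta\le 2n+2n_0$, hence $(n_2-n_0)\delta\le 2n$), and then separately argue that $n_0$ cannot be too small relative to $n_2$ — or, going the other way, bound $n_2$ using that $N_2\subseteq$ something controlled by $V(M)$. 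Writing $c=n/2+(n_2-n_0)/2$ exactly as in \eqref{eq:algebra1}, the perfect-matching proof got $c\le (n/2)(1+2/\delta)$ and then divided by $\mathrm{ALG}\ge n/2$. Here the denominator is only $|M|$, so the extra factor $\kappa=n/|M|$ appears when we replace $n/2$ by $|M|$: $c\le (n/2)(1+2/\delta) = |M|\cdot(\kappa/2)(1+2/\delta)$, which is already of the claimed shape once the arithmetic relating $(\kappa/2)(1+2/\delta)$ to $1+(\kappa+2)/(\delta-1)$ (or $(\delta-2)$, as the theorem statement has it) is pushed through; the slack between the two denominators $\delta-1$ and $\delta-2$ and between $n/2$ and $|M|$ is exactly what absorbs the unmatched vertices.

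Concretely, the step-by-step plan is: (1) set up $\mathcal{C}$, $\chi$, the $N_i$'s, and record $c=n_2+n_1/2$ and $\mathrm{ALG}(G)\ge|M|$; (2) reprove the claim $v\notin N'(u)$ for $u,v\in N_2$, hence $N'(v)\subseteq N_0\cup N_1$; (3) build $H$ and reprove $d_H\le 4$ on $N_0$, $d_H\le 2$ on $N_1$, obtaining \eqref{eq:maincount}; (4) invoke maximality of $M$ to get a second inequality controlling $n_0$, $n_1$, $n_2$ in terms of $|M|$ and $n$ — this is the genuinely new estimate; (5) feed both inequalities into the identity $c = n/2+(n_2-n_0)/2$ and simplify to $c\le(1+\varepsilon)|M| \le (1+\varepsilon)\mathrm{ALG}(G)$ with $\varepsilon=(\kappa+2)/(\delta-2)$. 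I expect step (4) to be the main obstacle: the perfect-matching argument never needed to separate "structural" vertices from "matching" vertices, and getting a clean bound — presumably something like $n_2\le$ (a linear function of $|M|$) or $n-n_0\le$ (a linear function of $|M|$), valid for an arbitrary maximum matching and not just a perfect one — will require carefully charging the degree-$\ge\delta$ condition at unmatched vertices against matched vertices, possibly via a second double-counting across the cut $(V(M), V\setminus V(M))$. The remaining algebra in step (5) is routine but must be done with care to land on the denominator $\delta-2$ rather than $\delta-1$.
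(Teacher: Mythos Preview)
Your steps (1)--(3) correctly reproduce the perfect-matching argument and yield $c\le (n/2)(1+2/\delta)$, but the arithmetic you sketch in step~(5) does not work: dividing by $|M|$ gives $c/|M|\le (\kappa/2)(1+2/\delta)$, and this is \emph{not} bounded by $1+(\kappa+2)/(\delta-2)$ once $\kappa$ is moderately large. For instance at $\delta=4$, $\kappa=10$ your bound is $7.5$ while the target is $5$; more generally the left side grows like $\kappa/2$ in $\kappa$ while the right side grows like $\kappa/(\delta-2)$, so for any fixed $\delta\ge 5$ your bound eventually exceeds the claimed one. The sentence ``the slack between the two denominators \ldots\ is exactly what absorbs the unmatched vertices'' is therefore not a calculation but a hope, and it is false. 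Everything hinges on your step~(4), which you correctly flag as the genuinely new ingredient but leave entirely unspecified; no second inequality of the form ``$n_2\le$ linear in $|M|$'' drops out of the vertex-cover property of $V(M)$ in any obvious way, and the $N_0/N_1/N_2$ framework does not seem to see $|M|$ finely enough.

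The paper in fact abandons the $N_0/N_1/N_2$ decomposition here. Instead it chooses $\chi$ to maximise the number of characteristic paths, extracts a matching $M'\subseteq\chi$ by taking alternate edges of each path, and writes $c=|M'|+t\le |M|+t$, where $t$ counts the \emph{unselected} edges. The whole proof then bounds $t$: one fixes the set $T$ of left endpoints of unselected edges (an independent set of internal vertices of $\chi$), sends $\delta-2$ ``special'' non-$\chi$ edges out of each, and double-counts to get $t(\delta-1)\le n+h_2$, where $h_2$ is the number of vertices receiving two special edges. A second, more delicate bipartite count --- using the path-maximality of $\chi$ to forbid certain colored edges and triangles --- shows that every $H_2$-vertex has its non-$T$ neighbours among the terminal vertices $A$ of $\chi$, and that degrees on both sides of the $(H_2,A)$ bipartite graph are at most $\delta-2$; since $|A|\le 2|M|$ (one matching edge per path), this gives $h_2\le 2|M|$ and hence $t\le |M|(\kappa+2)/(\delta-1)$. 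The decomposition $c\le |M|+t$ is the key reorientation you are missing: it makes $|M|$ the baseline rather than $n/2$, so that only the excess $t$ needs to be controlled.
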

	
	\begin{proof}
		Let ${\cal C}$ be an optimal coloring of $G$ using $c$ colors. Let $\chi$
		be a characteristic subgraph of $G$ with respect to the coloring ${\cal C}$,
		with the maximum number of characteristic paths. We will say a vertex $v$
		is an {\em internal} vertex of $\chi$ if it is an internal vertex of one of the
		characteristic paths. Similarly, a vertex will be called a {\em terminal}
		vertex of $\chi$ if it is a terminal vertex of one of the characteristic
		paths. Now, we pick a matching
		$M'$ from within $\chi$ by selecting alternate edges in each
		characteristic path, starting with the first edge in each path. Let $t$ be
		the number of {\em unselected} edges. Then we get:
		
		\begin {equation}
		\label {color-equation}
		c=|M'|+t\leq |M|+t 
		\end {equation}

		The remainder of the proof attempts to upper bound the excess term $t$. In
		fact we show that $t\leq |M|\cdot(2(\kappa+1)/(\delta-1))$ from which the
		theorem immediately follows. Let $N_2$ denote the set of all internal vertices in $\chi$ and let $T\subseteq N_2$
		be the set of vertices consisting of {\em left} endpoint of each unselected
		edge (see Figure \ref{fig:fig1}). Thus we have $|T|=t$. First, we note that $T$ is an independent set in
		$G$. This is because vertices in $T$ are mutually non-adjacent internal
		vertices of $\chi$ and hence have mutually disjoint incident colors. For
		each $v\in T$, choose a set of $\delta-2$ edges incident at $v$, which are
		not present in $\chi$ (this is possible, as each vertex has at most two
		neighbors in $\chi$ and $\delta(G)\geq 3$). Let us call these edges as {\em special} edges. Let $S_0,S_1$ and $S_2$ be sets of vertices of
		$V\backslash T$ which are incident with $0$, $1$ and $2$ of these special
		edges. Let $s_i=|S_i|$. Since the vertices in $T$ are incident with mutually disjoint sets
		of colors, a vertex in $V\backslash T$ is incident with at most two special
		edges emanating from $T$. Thus $V\backslash T = S_0\uplus S_1\uplus S_2$,
		or $s_0+s_1+s_2=n-t$.
		Counting the special edges across the bipartition $(T, V\backslash T)$ we
		have:
		\begin{equation*}
			t(\delta-2) = 2s_2+s_1 
		\end{equation*}
		Moreover, as $s_0+s_1+s_2=n-t$, we can rewrite the above equality as
		$t(\delta-2)=n-t + (s_2-s_0)$, and thus,
		\begin{equation}\label{eq:h2}
			t(\delta-1)\leq n + s_2.
		\end{equation}

		\begin{figure}[h]
			\centering
			\begin{tikzpicture}[xscale=1.2]
				\foreach \i in {0,1,2,3} {%
					\draw (0,\i) -- (4,\i);
					\foreach \j in {0,1,2,3,4} {%
						\draw (\j,\i) node {$\bullet$};
					}
				}
				
				\foreach \i in {0,1,2,3} {%
					\foreach \j in {1,3} {%
						\draw (\j,\i) node[draw, fill=none, rectangle] {$\bullet$};
						\draw (\j+0.5,\i) node {$\times$};
					}
				}
				
				\draw (6,1.5) ellipse (1cm and 2cm);
				Put some nodes in the ellipse
				\coordinate (A) at (6,2.5);
				\coordinate (B) at (6.5, 1.5);
				\coordinate (C) at (5.5, 0.5);
				
				\draw (A) node {$\bullet$};
				\draw (B) node {$\bullet$};
				\draw (C) node {$\bullet$};
				
				\draw[dashed] (A) -- (1,3) (A) -- (3,1)
				(B) -- (1,3) (B) -- (3,2)
				(C) -- (1,0) (C) -- (3,0);
				
			\end{tikzpicture}

			\caption{Characteristic paths are shown on the left. The unselected edges
				are marked with $\times$, and corresponding vertices in $T$ are indicated
				by a box. The special edges are indicated with dashed lines. The set $S_2$
				is indicated by the ellipse.}
			\label{fig:fig1}
		\end{figure}
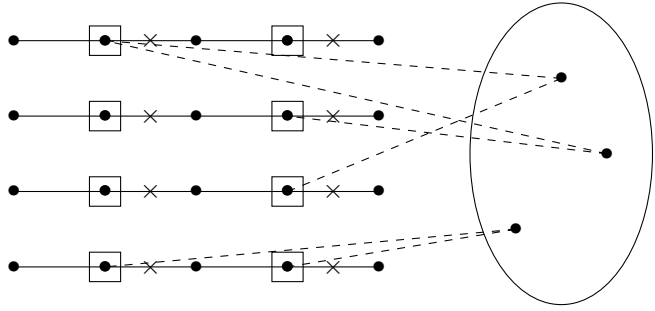

		To obtain a bound on $s_2$, we consider the neighbors of vertices in $S_2$
		which are not in $T$. Note that each vertex $v\in S_2$ has at least
		$\delta-2$ neighbors outside $T$. Let $N_1$ be the set of terminal vertices
		of the characteristic subgraph $\chi$ and $N_2'\subseteq N_2$ be the set of internal
		vertices of $\chi$, which are not in $T$. Let $N_0$ denote the set of
		vertices that are not incident with an edge in $\chi$. Note that
		$S_2\subseteq N_0$. This is because a vertex incident with an edge in $\chi$ 
		can only receive special edges of at most one  color. But then it has
		at most one neighbor in $T$, and hence it is not a vertex in $S_2$. We show the following:
		
		~~~~~~
		\begin{myclaim} \label{acoloredge}
			Let $a$ be a color present in a characteristic
			path of length at least two in $\chi$. Then there is no $a$-colored edge
			between two vertices of $N_0$.
		\end{myclaim}
		
		\smallskip
		\noindent {\it Proof of Claim:}  We prove by contradiction. Let $uv$ be an edge
		in a characteristic path of length at least two, and let $a$ be the color of
		$uv$. If possible, let
		$x$ and $y$ be two vertices in $N_0$ with edge $xy$ having color $a$. Then
		$\chi-uv+xy$ is a characteristic subgraph with more characteristic paths than $\chi$, a
		contradiction. The claim follows.
		
		~~~~
		\begin{myclaim}\label{acolortriangle}
			Let $a$ be a color present in a characteristic
			path of length at least two in $\chi$. Then there is no $a$-colored
			triangle formed by the $a$-colored edge in $\chi$ and a vertex from $N_0$.
		\end{myclaim}
		\smallskip
		\noindent {\it Proof of Claim:} To
		prove, again assume that $uv$ is an $a$-colored edge in $\chi$ where $u$ is
		not a terminal vertex. If possible, let $uvw$ be an $a$-colored triangle
		with $w\in N_0$. Again, $\chi-uv+vw$ is a characteristic subgraph with more
		characteristic paths than $\chi$, contradicting the choice of $\chi$. The claim
		follows.
		
		~~~~
		\begin{myclaim} \label{N(S_2)}
			The neighbors of vertices in $S_2$, which are not in $T$, lie in $N_1$.
		\end{myclaim}
		
		\smallskip
		\noindent {\it Proof of Claim:} To prove, we observe that each color incident on vertices in $S_2$ appears in some
		characteristic path of length at least two (as vertices in $T$ are on such
		paths). Then as $S_2\subseteq N_0$, from Claim~\ref{acoloredge}, we have that a vertex $v\in S_2$ is not adjacent to a vertex in $N_0$.
		Further, $v$ is not adjacent to a vertex in $N_2'$: Suppose $v$ is adjacent
		to a vertex $u$ in $N_2'$.  Let $color(vu) = a$. 
		Since $u \in N_2'$, it is an internal vertex in $\chi$, and therefore 
		there should be an edge, say $uw$ in $\chi$  such that $color(uw)=a$. Now there should be an edge $vx$
		such that $color(vx)=a$,   for some $x \in T$ since $v \in S_2$.
		But it is clear that $uw$ is the only edge colored with color $a$ in $\chi$,
		and since $u \not \in T$, $w=x$.  But then $v$ is adjacent to both
		$u$ and $w$, forming an $a$-colored triangle containing an edge of $\chi$,
		which is a contradiction by Claim~\ref{acolortriangle}. We infer that $v$ cannot be 
		adjacent to any vertex in $N_2'$. 
		Hence, all the neighbors of
		$v$, not in $T$ are in $N_1$.
		
		~~~~~~~~~
		
		Consider the bipartite graph $X$ with bipartition $(S_2, N_1)$ with the edge set
		consisting of $N_1$-$S_2$ edges in $G$. Clearly $d_X(v)\geq \delta-2$ for $v\in
		S_2$. We now prove that $d_X(u)\leq \delta-2$ for $u\in N_1$. Let $u\in N_1$,
		and let $uz$ be the edge of $\chi$ incident at $u$, where $a$ is the color
		of $uz$. Let $v$ be a neighbor of $u$ in $S_2$. We show that $uv$ is not of
		color $a$. For the sake of contradiction, suppose that color of $uv$ is $a$.
		Then as $v\in S_2$, it has a neighbor $w$ in $T$, which is incident with
		an edge of color $a$. But then $w=z$ as $z$ is the only  possible 
		internal vertex in
		$\chi$ incident with edge of color $a$. Now $uvz$ form an $a$-colored
		triangle, which contradicts Claim~\ref{acolortriangle}. Thus, all the edges of $X$ incident
		at $u$ must have the color different from $a$ and hence must have the same
		color (say $b$). Clearly, all the vertices in $S_2$ incident with an edge of
		color $b$ must be incident with a $b$-colored special edge from a vertex
		in $T$. As vertices in $T$ are incident with mutually disjoint colors, we
		infer that all such vertices are incident with $b$-colored special edges
		from a single vertex $w\in T$ (see Figure \ref{fig:fig2}). Since there are exactly $\delta-2$ special
		edges emanating from a vertex in $T$, we conclude that there are, at most
		$\delta-2$ vertices in $S_2$ incident with color $b$. Hence $d_X(u)\leq
		\delta-2$ for $u\in N_1$. Now, we have $s_2(\delta-2)\leq |E(S_2,N_1)|\leq
		(\delta-2)\cdot |N_1|$. Finally, observe that $|N_1|\leq 2|M|$ and hence
		$s_2\leq 2|M|$. Substituting in Equation (\ref{eq:h2}), we have:
		\begin{align}
			t(\delta-1) &\leq n + 2|M| \nonumber \\
			&\leq  |M|\cdot\left(\frac{n}{|M|}+2\right) \nonumber\\
			t&\leq 2|M|\cdot\left(\frac{\kappa+1}{\delta-1}\right) \label{eqn:kappaq=2} \hspace{1cm} \text { \big(as } \kappa=\frac{n}2{|M|}\big).
		\end{align}

		\medskip
		
		Substituting \eqref{eqn:kappaq=2} in inequality \eqref {color-equation}, we then have 
		
		\begin{align*}
			ar(G,K_{1,3})=c\le |M| \Big(1+ \frac {2(\kappa+1)} {\delta-1}\Big)
		\end{align*}
		This proves the theorem.
	\end{proof}
	\smallskip
	
	The corollary below follows from the fact that, when $q=2$, for any graph $G$, $\mathrm{OPT}(G)=ar(G,K_{1,3})$ and $\mathrm{ALG}(G)\geq |M|$, where $M$ is a maximum matching of $G$.
	\begin{corollary}
		Let $G$ be an $n$-vertex  connected graph with $\delta(G)\geq 3$. Let $M$
		be a maximum matching of $G$. Then $\rm {OPT(G)}\le  \Big(1+ \frac {2(\kappa+1)} {\delta-1}\Big)\rm {ALG(G)}$, where $\kappa=n/2|M|$ being the ratio of
		number of vertices to the size of a maximum matching of the graph.  
	\end{corollary}
	
	\begin{figure}[h]
		\centering
		\begin{tikzpicture}[xscale=1.2]
			
			\draw (0,1) ellipse (1cm and 2.5cm);
			\draw (3,1) ellipse (1cm and 2.5cm);
			\draw (3,1.5) ellipse (0.5cm and 1cm);
			\draw (5,5) ellipse (1.5cm and 0.6cm);
			
			\coordinate (U) at (0.5,1);
			\coordinate (U1) at (-1.2, 1.5);
			\coordinate (V1) at (3.2,1.7);
			\coordinate (V2) at (2.8,1.2);
			\coordinate (V3) at (2.8,2.0);
			\coordinate (T1) at (5,5);
			
			\draw (U) node {$\bullet$};
			\draw (U) node[below] {$u$};
			\draw (V1) node {$\bullet$};
			\draw (V1) node[right] {$v_1$};
			\draw (V2) node {$\bullet$};
			\draw (V2) node[right] {$v_2$};
			\draw (V3) node {$\bullet$};
			\draw (V3) node[right] {$v_3$};
			\draw (T1) node {$\bullet$};
			\draw (T1) node[above] {$w$};
			\draw (U1) node {$\bullet$};
			\draw (U1) node[below] {$z$};
			
			\draw (U1) to node[above] {$a$} (U);
			\draw (U) to node[above] {} (V1);
			\draw (U) to node[below] {$b$} (V2);
			\draw (U) to node[above] {$b$} (V3);
			\draw[dashed] (T1) to node[right] {$b$} (V1);
			\draw[dashed] (T1) to node {} (V2);
			\draw[dashed] (T1) to node[left] {$b$} (V3);
			
			\coordinate (A) at (0,3.5);
			\coordinate (H) at (3,3.5);
			\coordinate (T) at (5,5.5);
			
			\draw (A) node[above] {$N_1$};
			\draw (H) node[above] {$S_2$};
			\draw (T) node[above] {$T$};
			
		\end{tikzpicture}
		\caption{The vertex $u$ in $N_1$ is incident with $a$-colored edge in $\chi$.
			Its neighbors in $S_2$ are through the ``other'' color at $u$, namely $b$. Moreover,
			these neighbors are also incident with $b$-colored special edges from a
			vertex in $T$.}
		\label{fig:fig2}
	\end{figure}
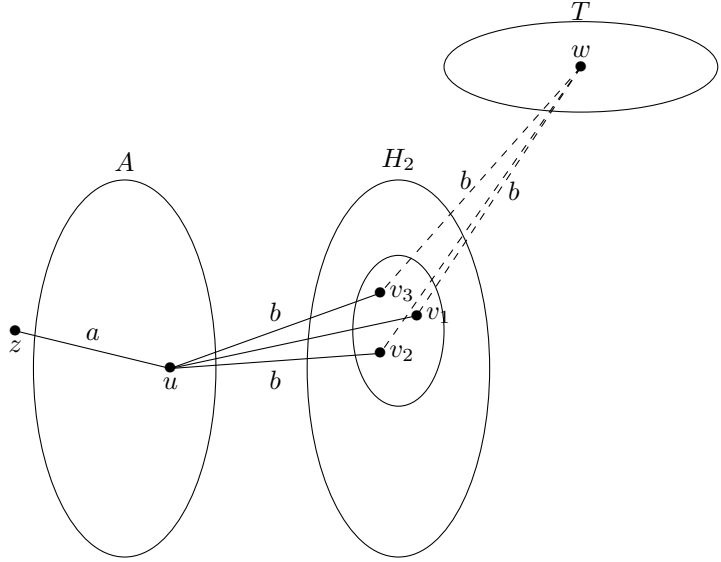
	
	\section{Result for general graphs when $q>2$}
	As a generalization of the results obtained in Section~\ref{sec:q=2}, here we derive a bound for general graphs (for $q>2$) in terms of $\kappa$. Even though we achieve this by extending some ideas that we used in proving Theorem~\ref{thm:main}, here it is crucial to have $q$ to be a positive integer such that `$q>2$'. 
	\begin{theorem} \label{thm:main2}
		Let $G$ be a connected graph on $n$ vertices, with minimum degree $\delta(G)=\delta\geq q+1$, where $q$ is a positive integer such that $q>2$. Let $E(H_{q-1})$ be the set of edges in a maximum $(q-1)$-sub-factor of $G$. Then $ar(G,K_{1,q+1}) \le|E({H_{q-1})}|\bigg( 1+ \frac {2\big(\kappa+\frac{1}{(q-1)}\big)} {\delta-1}\bigg)$, where $\kappa= \frac{n(q-1)}{2|E(H_{q-1})|}$.
	\end{theorem}
	\begin{proof}
		Let $\mathcal{C}$ be an optimal edge $q$-coloring of $G$ using $c$ colors. Let $\chi$ be a characteristic subgraph of $G$ with respect to the coloring $\mathcal{C}$, having a minimum  possible number of $q$ degree vertices. Let $T$ denote the set of $q$ degree vertices in $\chi$, where $|T|=t$. 
		Let $\chi'$ denote the subgraph of $G$ obtained from $\chi$ by removing one edge incident to each of the $t$ vertices in $T$. Clearly, the maximum degree of $\chi'$ is less than or equal to $q-1$. Thus we have, 
		\begin{equation} \label{coloreqn}
			c=|E(\chi)|\leq |E(\chi')|+t  \leq |E(H_{q-1})|+t
		\end{equation}
		where $|E(H_{q-1})|$ is the number of edges in a maximum $(q-1)$-sub-factor of $G$.
		In the remainder of the proof, we bound this excess $t$. Our goal is to show that $t\leq 2|E(H_{q-1})|\Big(\big(\kappa+\frac{1}{q-1}\big)/(\delta-1)
		\Big)$.
		
		\medskip
		
		For each $v\in T$, choose a set of $\delta-q$ edges incident at $v$, which are not present in $\chi$ (this is possible since each vertex in $T$ has at most $q$ neighbors in $\chi$ and $\delta \geq q+1$). As in Theorem~\ref{thm:main}, we call these edges as \emph{special edges} and say that $u$ is a \emph{special neighbor} of $v$ if $uv$ is a special edge in $G$. Note that for any vertex $v\in T$, any special neighbor of $v$ belong to the set $V\setminus T$. This is because, any edge between a pair of vertices in $T$ has to be an edge in $\chi$. Recall that for each $i\in \{0,1,2,\ldots,q\}$, $N_i$ denote the set of vertices that has degree $i$ in $\chi$. Clearly, $T=N_q$ and $V\setminus T= \biguplus _{i=0}^{q-1} N_i$. Then by the definition of special neighbors and the fact that each vertex in $G$ has at most $q$ distinct colors incident to it, we have the following remark. 
		\begin{remark} \label{rem:specialcolor}
			Let $xu$ be a special edge in $G$ having a color $a$ on it such that $x\in T$ and $u\in V\setminus T$. Then the color $a$ is present in $\chi$ on an edge incident to $x$, say $xw$ and $u\neq w$.    
		\end{remark}
		
		\medskip
		Now we claim the following.
		\medskip
		\begin{myclaim} \label{claim:samecolor}
			\emph{Let $u,v\in V\setminus T$ (possibly, $u=v$) and $x,y\in T$ be such that $x\neq y$, and the special edges $ux$ and $vy$ incident to $u$ and $v$ have the same color with respect to the coloring $C$. Then $u,v\in N_{q-1}$. Moreover, the vertices $x$ and $y$ are the only possible special neighbors of both the vertices $u$ and $v$.}
		\end{myclaim}
		
		\smallskip
		\noindent {\it Proof of Claim:}
		Let $a$ be the color on the edges $ux$ and $vy$. Since $x,y\in T$ and $x\neq y$, by Remark~\ref{rem:specialcolor}, we have that $xy$ is an edge in $\chi$ and $a$ is the color on the edge $xy$. Without loss of generality, assume that $u\in N_i$, for some $i\leq q-2$. Then, as $d_{\chi}(u)\leq q-2$, we have that $\chi- xy + ux$ is a characteristic graph of $G$ having a fewer number of degree $q$ vertices than $\chi$. This contradicts the choice of $\chi$. Since the proof for the case, $v\in N_{i}$ is symmetric, we can therefore conclude that $u,v\in N_{q-1}$. 
		
		Suppose that there exists a vertex $z\neq x,y$ in $T$ such that $z$ is a special neighbor of $u$ (as in the previous paragraph, proof for the case in which \textit{$z$ is a special neighbor of $v$} is symmetric). Let $b$ be the color on the special edge $uz$. Note that $a\neq b$, as $z\neq x,y$, and at most two vertices in $T=N_q$ can share the same color $a$. Now as $z\in T$, we have by Remark~\ref{rem:specialcolor} that the color $b$ is present in $\chi$ on an edge incident to the vertex $z$, say $zw$ and $w\neq u$. Since $u\in N_{q-1}$, there are $q-1$ distinct colors incident at $u$ in $\chi$. Then by the definition of special edges and Remark~\ref{rem:specialcolor}, we have that all these $q-1$ colors are distinct from both the colors $a$ and $b$ on the special edges incident at $u$ . Since $a\neq b$, we then have $q+1$ distinct colors incident with the vertex $u$. This contradiction proves the claim. 
		
		\bigskip
		\begin{myclaim}\label{claim:atmostq}
			\emph{Each vertex $u\in V\setminus T$ has at most $q$ special edges incident to it.}
		\end{myclaim}
		\smallskip
		\noindent {\it Proof of Claim:}
		Suppose not. Assume that there exists a vertex $u\in V\setminus T$ that has at least $q+1$ special edges incident to it. Since there is only $q$ distinct colors incident at $u$, there exist at least two vertices $x,y\in T$ such that the special edges $ux$ and $uy$ have the same color. By the application of Claim~\ref{claim:samecolor} (with $u=v$), we then have that $u$ cannot have more than two special neighbors in $T$ and, therefore, cannot have more than two special edges incident to it. This contradicts the fact that $q+1>2$. This proves the claim.
		
		\bigskip
		Let $S_0,S_1,S_2,\ldots,S_q$ denote the set of vertices in $V\setminus T$, which are incident with $0,1,2,\ldots,q$ special edges incident to it. By Claim~\ref{claim:atmostq}, we have $V\setminus T=\biguplus _{i=0}^{q} S_i$. Note that the definition of special edges together with Remark~\ref{rem:specialcolor} implies that for any $u\in S_i$, where $i\in \{1,2,\ldots,q\}$, the color on the  special edges incident at $u$ is distinct from the color on the edges incident at $u$ in $\chi$.
		We then have the following remark due to Claim~\ref{claim:samecolor} and the fact that each vertex in $G$ has at most $q$ distinct colors incident to it.
		\begin{remark} \label{disjointcolors}
			\begin{enumerate}
				\vspace{-0.06in}
				\renewcommand{\theenumi}{\alph{enumi}}
				\renewcommand{\labelenumi}{(\textit{\theenumi})}
				\item \label{rem:degree} For each $i>2$, the $i$ special edges incident to each vertex in $S_i$ are of distinct colors. Further, for each $i> 2$ and $u\in S_i$, we have that $d_{\chi}(u)\leq q-i$ (i.e. for each $i>2$, we have $S_i\subseteq N_0\cup N_1\cup\ldots\cup N_{q-i}$).
				\item \label{rem:distinctT} Let $i>2$ and $x,y\in T$. If there exist distinct vertices $u,v\in S_i$ such that the special edges $ux$ and $vy$ have the same color, then $x=y$.
			\end{enumerate}
			
		\end{remark}
		
		\medskip
		Let $s_i=|S_i|$. Then, as $V\setminus T=\biguplus _{i=0}^{q} S_i$, we have $s_0+s_1+\cdots+s_q = n-t$. Counting the special edges across the bipartition $(T,V\setminus T)$, we have:
		\begin{align}
			t(\delta -q) &= qs_q+(q-1)s_{q-1}+\cdots+2s_2+s_1 \nonumber\\ 
			& = (q-1)(s_q+s_{q-1}+\cdots+s_2+s_1+s_0)+s_q-s_{q-2}-2s_{q-3}-\cdots-(q-1)s_0 \nonumber
		\end{align}
		Since $s_0+s_1+\cdots+s_q = n-t$, the above equation can be written as 
		
		\begin{align*} 
			t(\delta -q) &=  
			(q-1)(n-t)+s_q-s_{q-2}-2s_{q-3}-\cdots-(q-1)s_0 \\
			&\leq (q-1)(n-t)+s_q 
		\end{align*}
		Thus we have,
		\begin{align}\label{eqn:main}
			t(\delta -1) &\leq  
			n(q-1)+s_q 
		\end{align}
		To obtain a bound on $s_q$, we consider the neighbors of vertices in $S_q$ that are not present in $T$. Since $\delta\geq q+1$, each vertex $u\in S_q$ has at least $\delta -q$ neighbors outside $T$. (For any vertex $u\in S_q$, it is not difficult to verify that all the neighbors of $u$ in $T$ are its special neighbors and, hence $u$ has at most $q$ neighbors in $T$). Also, note that as $q>2$, by Remark~\ref{disjointcolors}(\ref{rem:degree}), we have that each vertex in $S_q$ has zero degree in $\chi$. Therefore, $S_q\subseteq N_0$.
		
		\medskip
		Let $B=N_0\cup N_1\cup \cdots \cup N_{q-2}$. We then have the following claim. 
		
		\bigskip
		\begin{myclaim}\label{claim:nbrHq}
			\emph{For any pair of vertices $u\in B$ and $v\in S_q$ we have $uv\notin E(G)$.}
		\end{myclaim}
		\smallskip
		\noindent {\it Proof of Claim:}
		Suppose not. Assume that there exist vertices $u\in B$ and $v\in S_q$ such that $uv\in E(G)$. Let $a$ be the color on edge $uv$ with respect to the coloring $\mathcal{C}$. Since $v\in S_q$, by Remark~\ref{disjointcolors}(\ref{rem:degree}), there exists a vertex $w\in T$ such that the special edge $vw$ has the color $a$ on it. Since $w\in T=N_q$, there exists an edge say $wx$ in $\chi$ that has the color $a$ on it (possibly, $x=u$). Note that as $v\in S_q\subseteq N_0$ and $u\in B= N_0\cup N_1\cup \cdots \cup N_{q-2}$, degree of both the vertices $u$ and $v$ are at most $q-2$ in $\chi$. Therefore, $\chi-wx+uv$ is a characteristic subgraph of $G$ that has a fewer number of $q$ degree vertices than $\chi$. This contradicts the choice of $\chi$ and proves the claim. 
		
		\bigskip
		
		Therefore by Claim~\ref{claim:nbrHq} and the fact that $V\setminus T = B\biguplus N_{q-1}$, we can conclude that each vertex in $S_q$ has its $\delta-q$ neighbors in $N_{q-1}$. Now consider the bipartite graph $X$ with bipartition $(S_q,N_{q-1})$ with edge set consisting of $S_q-N_{q-1}$ edges in $G$. Clearly, for each vertex $v\in S_q$, we have $d_X(v) \geq \delta-q$. We now prove that for each vertex $u\in N_{q-1}$, $d_X(u) \leq \delta-q$. Let $u\in N_{q-1}$. Let $a$ be any color among the $q-1$ colors on the edges incident with $u$ in $\chi$. Let $uz$ be the edge colored $a$ in $\chi$. Clearly, $z\notin N_0$ and therefore $z\notin S_q$ (since $S_q\subseteq N_0$). We claim that the color $a$ is not present on any edge of the form $uv$, where $v\in S_{q}$. Suppose not. Let $v\in S_{q}$ be such that the edge $uv$ is colored $a$ with respect to the coloring $\mathcal{C}$. Since $v\in S_q$, we then have that there exists a  vertex $w\in T=N_q$ such that the special edge $vw$ has the color $a$ on it. Thus by choice of the color $a$ and Remark~\ref{rem:specialcolor}, it should be the case that $w=z$. But then as $d_{\chi}(v)=0$, we can observe that $\chi-uz+uv$ is a characteristic subgraph of $G$ that have fewer number of $q$ degree vertices than $\chi$. This contradicts the choice of $\chi$. Therefore, we can conclude that for any vertex $u\in N_{q-1}$, none of the $q-1$ colors on the edges incident to $u$ in $\chi$ is present on the edges of the form $uv$, where $v\in S_{q}$. This implies that all the edges of $X$ incident
		at $u$ must have the same color (which is remaining at $u$), say $b$. Since $q>2$, by Remark~\ref{disjointcolors}(\ref{rem:degree}), we have that each vertex in $S_q$ is incident with $q$ special edges, all having distinct colors on it. This implies that all the vertices in $S_q$ that are incident with a $b$-colored edge in $X$ are also incident with a $b$-colored special edge. Again, as $q>2$, by Remark~\ref{disjointcolors}(\ref{rem:distinctT}), we can conclude that all the $b$ colored special edges incident at the vertices in $S_q$ have its other end point, a single vertex from $T$, say $w\in T$ (refer Figure~\ref{fig:fig2}, where in this context, $N_1$ and $S_2$ are now replaced by $N_{q-1}$ and $S_q$ respectively). Since there are exactly $\delta -q$ special edges emanating from a vertex in $T$, this implies that there are at most $\delta-q$ vertices in $S_q$ incident with color $b$. Hence $d_X(u)\leq \delta-q$ for each $u\in N_{q-1}$. Thus we have, $s_q(\delta-q)\leq |E(S_q,N_{q-1})|\leq (\delta-q)|N_{q-1}|$. This implies that $s_q\leq |N_{q-1}|$.
		
		\medskip
		
		Applying the above inequality in \eqref{eqn:main}, we then have:
		\begin{align} \label{eqn:second}
			t(\delta -1) &\leq  
			n(q-1)+|N_{q-1}|   
		\end{align}
		
		\medskip
		Recall the $(q-1)$-degree bounded subgraph $\chi'$ that we have defined earlier. We have, 
		\begin{align}
			|E(H_{q-1})| \geq |E(\chi')| &\geq |E(\chi)|-t  \nonumber\\
			&\geq \frac{1}{2}\big[tq+(q-1)|N_{q-1}|\big]-t \nonumber\\ 
			\text{i.e. }\Big(\frac{q-1}{2}\Big)\big|N_{q-1}\big| &\leq \big|E(H_{q-1})\big|- \Big[\frac{tq}{2}-t\Big] \nonumber\\ 
			|N_{q-1}| &\leq \Big(\frac{2}{q-1}\Big)\big|E(H_{q-1})\big| \hspace{1.5cm} \text{(since $q>2$) \nonumber}
		\end{align}
		Substituting the above in \eqref{eqn:second}, we then have:
		\begin{align*}
			t(\delta -1) &\leq  
			n(q-1)+\Big(\frac{2}{q-1}\Big)\big|E(H_{q-1})\big|\\
			&= 2\big|E(H_{q-1})\big|\bigg[\frac{n(q-1)}{2\big|E(H_{q-1})\big|}+\frac{1}{q-1}\bigg]
		\end{align*}
		This implies that,
		\begin{align}\label{eqn:kappa}
			t&\leq 2\big|E(H_{q-1})\big|\bigg[\frac{\kappa+\frac{1}{q-1}}{\delta-1}\bigg] \hspace{1cm} \text { \bigg(as } \kappa=\frac{n(q-1)}{2|E(H_{q-1})|}\bigg) 
		\end{align}

		\medskip
		
		By substituting \eqref{eqn:kappa} in inequality \eqref{coloreqn}, and the fact that $ar(G,K_{1,q+1})=\mathrm{OPT(G)}=c$, we then have 
		
		\begin{align*}
			ar(G,K_{1,q+1})=c\leq |E(H_{q-1})|+t \leq |E(H_{q-1})|\bigg[1+\frac{2(\kappa+\frac{1}{q-1})}{\delta-1}\bigg].
		\end{align*}
		Hence the theorem.
	\end{proof}
	\smallskip
	As in the case $q=2$, we have the following corollary due to the fact that, ${\rm OPT}(G)=ar(G,K_{1,q+1})$ and ${\rm ALG}(G) \geq |E(H_{q-1})|$, where $E(H_{q-1})$ is the set of edges in a maximum $(q-1)$-sub-factor of $G$.
	\begin{corollary} \label{corr:main2}
		Let $G$ be a connected graph on $n$ vertices, with $\delta(G)\geq q+1$. Let $E(H_{q-1})$ be the set of edges in a maximum $(q-1)$-sub-factor of $G$. Then ${\rm OPT}(G)\leq \bigg( 1+ \frac {2\big(\kappa+\frac{1}{(q-1)}\big)} {\delta-1}\bigg){\rm ALG}(G)$, where  $\kappa= n(q-1)/2|E(H_{q-1})|$.
	\end{corollary}
	\subsection{Tightness for Theorem~\ref{thm:main} and Theorem~\ref{thm:main2}:} \label{sec:tightkappa}
	In this section, we propose a construction of a family of graphs for which the sub-factor based algorithm can end up with  an approximation guarantee almost equal to the expressions obtained in Theorem~\ref{thm:main} and Theorem~\ref{thm:main2}. 
	\begin{theorem}\label{thm:tightex1}
		Let a positive rational number $\kappa$ and a positive integer $q\geq 2$ be given. Then for any $\delta> \max\{q,(2\kappa-1)(q-1)\}$, it is possible to find infinitely many positive integers $t$ such that there exists a graph $G$ on $2\kappa t(q-1)$ vertices and with the following properties: (1) $|E(H_{q-1})|=t(q-1)^2$, (2) minimum degree of $G$ equals $\delta$, and (3) $G$ possesses an edge $q$-coloring using at least $|E(H_{q-1})|\Big(1+ \frac{2(\kappa-1)}{\delta-1}\Big)$ colors, where $E(H_{q-1})$ denotes the set of edges in a maximum $(q-1)$-sub-factor of $G$.
		
	\end{theorem} 
	
	\medskip
	\noindent{\textbf{Comment:}} Note that by Theorem~\ref{thm:main} and Theorem~\ref{thm:main2}, we have the following for each $q\geq 2$: For a connected graph $G$ with minimum degree $\delta(G)=\delta\geq q+1$, $ar(G,K_{1,q+1}) =\mathrm{OPT}(G)\le|E({H_{q-1})}|\bigg( 1+ \frac {2\big(\kappa+\frac{1}{(q-1)}\big)} {\delta-1}\bigg)$.  Now, for a graph $G$, satisfying the conditions in Theorem~\ref{thm:tightex1}, we have that
	$ar(G,K_{1,q+1}) =\mathrm{OPT}(G)\geq |E(H_{q-1})|\Big(1+ \frac{2(\kappa-1)}{\delta-1}\Big)$. This implies that, the approximation guarantee that we provided in Theorem~\ref{thm:main} and Theorem~\ref{thm:main2} are almost tight.
	
	\medskip
	\begin{proof}
		We will derive the value of $t$ in terms of $\kappa, q$, and $\delta$ towards the end of the proof. Let $G$ be a bipartite graph with partite sets $S$ and $T$, where the set $S$ is a disjoint union of $q-1$ sets say, $S_1, S_2,\ldots, S_{q-1}$ and $T$ is a disjoint union of $q$ sets, say $T_1,T_2,\ldots, T_{q-1}$ and $B$, where $|S_i|=|T_i|=t$, for each $i\in \{1,2,\ldots,q-1\}$ (refer Figure~\ref{fig:fig4}). 
		
		\medskip
		
		For $i\in \{1,2,\ldots,q-1\}$, $j\in \{1,2,\ldots,t\}$, let $s^j_i$ and $t^j_i$ respectively denote the $j^{th}$ vertices of the sets $S_i$ and $T_i$. In $G$, we first add some edges to form a $(q-1)$-regular subgraph, say $M$ as follows: For each $i\in \{1,2,\ldots,q-1\}$, $j\in \{1,2,\ldots,t\}$, make the vertex $s^j_i\in S_i$ adjacent to each vertex in the set, $\{t_1^j, t_2^j,\ldots, t_{q-1}^j\}$. Clearly, the edges of $M$ form a $(q-1)$-regular subgraph on $2t(q-1)$ vertices (since $|S\cup T|=2t(q-1)$). In our coloring of $G$, we give $t(q-1)^2$ distinct colors to  the edges of $M$. Now, we will add more edges to the graph $G$. Note that, irrespective of any additional edges in $G$, the subgraph $M$ remains as a suitable candidate for a maximum $(q-1)$-sub-factor, $H_{q-1}$ of $G$. This is because all the vertices in the partite set $S$ of the bipartite graph $G$ have degree $q-1$ in $M$. Thus we have, $|E(H_{q-1})|=|M|=t(q-1)^{2}$.  This proves condition~(1) in the statement of the theorem.
		
		\medskip
		
		For $i\in \{1,2,\ldots,q-1\}$, $j\in \{1,2,\ldots,t\}$, let $S_i=X_i\uplus Y_i\uplus Z_i$, where $X_i=(x_i^1,x_i^2,\ldots,x_i^{\delta-q+2})$, $Y_i=(y_i^1,y_i^2,\ldots,y_i^h)$ and $Z_i=(z_i^1,z_i^2,\ldots,z_i^{h'})$. Note that the value of $h$ will be derived along with the parameter $t$ in the later part of the proof, but we define $h'=t-(\delta-q+2)-h$.
		Now for each $i\in \{1,2,\ldots,q-1\}$, in $G$, we also make all the vertices in $X_i$ adjacent to all the vertices in $T_i\cup B$ (see the dashed edges in Figure~\ref{fig:fig4}). This implies that each vertex in $T_i$ has degree exactly $\delta$ and each vertex in the sets $X_i$ and $B$ has degree at least $\delta$ in $G$. Further, in the above coloring of $G$, we can give all these additional edges (i.e. dashed edges in Figure~\ref{fig:fig4}) a single color. 
		\medskip
		
		
		Let $B=\{v_1,v_2,\ldots,v_{\alpha}\}$, where $\alpha=(h'+1)(\delta-1)$. In fact, the set $B$ can be viewed as $(h'+1)$ bags, say $B_1,B_2,\ldots,B_{h'+1}$, where each bag is of size $(\delta-1)$. In $G$, for each $i\in \{1,2,\ldots,q-1\}$, we make all the vertices in the set $Y_i$ adjacent to all the vertices in $B_1$ so that each vertex in $Y_i$ has degree at least $\delta$ in $G$. In the coloring of $G$, we introduce $(q-1)$ new colors to color the sets of edges connecting $Y_i$ and $B_1$, where $i\in \{1,2,\ldots,q-1$\} (one color for each set). In $G$, for each $j\in \{1,2,\ldots,h'\}$, we also make the vertices in the set $\{z_1^j,z_2^j,\ldots,z_{q-1}^j\}$ adjacent to all the vertices in $B_{j+1}$, so that each vertex in $Z_i$ has degree at least $\delta$ in $G$ (see bold edges in Figure~\ref{fig:fig4}). Here we introduce $h'(q-1)$ new colors in the coloring of $G$ to color the sets of edges connecting the set $\{z_1^j,z_2^j,\ldots,z_{q-1}^j\}$ to each bag $B_{j+1}$, where $j\in \{1,2,\ldots,h'\}$. Therefore, we have an edge coloring of the graph $G$ that uses at least $t(q-1)^{2}+ (h'+1)(q-1)+1$ colors. Moreover, as we have the minimum degree of $G$ to be $\delta$, condition~(2) in the statement of the theorem is also satisfied.
		
		\medskip
		Recall that $|E(H_{q-1})|=|M|=t(q-1)^{2}$.  Then, by the definition, the value of $\kappa$ has to be, $\kappa=\frac{n(q-1)}{2|E(H_{q-1})|}=\frac{n(q-1)}{2t(q-1)^{2}}$. This implies that the number of vertices in $G$, i.e. $n=2\kappa t(q-1)$.
		
		\medskip
		Alternatively, from the construction of the graph $G$, we have the number of vertices in $G$ as follows: 
		\begin{align*}
			n &= 2t(q-1) + (t-(\delta-q+2)-h+1)(\delta-1)
		\end{align*}
		Therefore we have,
		\begin{align*}
			2\kappa t(q-1) &= 2t(q-1) + (t-(\delta-q+2)-h+1)(\delta-1)
		\end{align*}
		From the above equation, we can infer that the value of $h$ has to be the following:
		\begin{align}\label{eqn:h}
			h &=\frac{t}{(\delta-1)}\big(2(q-1) +\delta-1 -2\kappa (q-1)\big)-(\delta-q+1)
		\end{align}
		
		Note that the above value of $h$ has to be a positive integer, where $\kappa$ and $q$ are already given. Therefore, for $h$ to be a positive integer, the value of $t$ should be selected in such a way that $t>\frac{(\delta-1)(\delta-q+1)}{2(q-1) +\delta-1 -2\kappa (q-1)}$ and to satisfy a further requirement, we also need $t(q-1)^{2} \big(1+ \frac{2(\kappa-1)}{\delta-1}\big)$ to be a positive integer. It is clear that we can choose infinitely many positive integers $t$, satisfying the above property.
		
		\medskip
		
		Recall that the number of colors used in $G$ is at least $t(q-1)^{2}+ (h'+1)(q-1)+1$, where $h'=t-(\delta-q+2)-h$. By substituting the  value of $h$ from~\eqref{eqn:h} in the above expression, we then have that the number of colors used in $G$ to be at least, $t(q-1)^{2} + \Big[t-\delta +q -1 -\big[\frac{t}{\delta-1}\big(2(q-1) +\delta-1 -2\kappa (q-1)\big)-(\delta-q+1)\big]\Big](q-1) + 1$. On simplifying, we get the number of colors to be at least, 
		\medskip
		
		$t(q-1)^{2} \Big(1+ \frac{2(\kappa-1)}{\delta-1} + \frac{1}{t(q-1)^{2}}\Big) \geq t(q-1)^{2} \Big(1+ \frac{2(\kappa-1)}{\delta-1}\Big) = |E(H_{q-1})| \Big(1+ \frac{2(\kappa-1)}{\delta-1}\Big)$ as required in condition~(3) in the statement of the theorem.
		
		\medskip
		
		The above value is very close to our approximation, $|E(H_{q-1})| \Big(1+ \frac{2\big(\kappa+\frac{1}{q-1}\big)}{\delta-1}\Big)$.

	\end{proof}
	
	\begin{figure}
		\centering
		\begin{tikzpicture}
			\draw (0,0) ellipse(1cm and .5cm);
			\draw (3,0) ellipse(1cm and .5cm);
			\draw (7,0) ellipse(1cm and .5cm);
			\draw[thick,dotted] (4.8,0)--(5.3,0);
			\draw (11.9,0) ellipse(3.2cm and 1cm);
			\draw (1.5,4.5) ellipse(2.2cm and .9cm);
			\draw (6.5,4.5) ellipse(2.2cm and .9cm);
			\draw (12.5,4.5) ellipse(2.2cm and .9cm);
			\draw[thick,dotted] (9.3,4.5)--(9.8,4.5);
			\draw[thick] (.5,5.3)--(.5,3.7);
			\draw[thick] (1.7,5.4)--(1.7,3.6);
			\draw[thick] (5.5,5.3)--(5.5,3.7);
			\draw[thick] (6.7,5.4)--(6.7,3.6);
			\draw[thick] (11.5,5.3)--(11.5,3.7);
			\draw[thick] (12.7,5.4)--(12.7,3.6);
			\draw (2, 4.5) node {$\bullet$};
			\draw (2.5, 4.5) node {$\bullet$};
			\draw (3.5, 4.5) node {$\bullet$};
			\draw (7, 4.5) node {$\bullet$};
			\draw (7.5, 4.5) node {$\bullet$};
			\draw (8.5, 4.5) node {$\bullet$};
			\draw (13, 4.5) node {$\bullet$};
			\draw (13.5, 4.5) node {$\bullet$};
			\draw (14.5, 4.5) node {$\bullet$};
			\draw[thick] (9.5,0) circle(0.55cm);
			\draw[thick] (10.8,0) circle(0.55cm);
			\draw[thick] (12.1,0) circle(0.55cm);
			\draw[thick] (14.2,0) circle(0.55cm);
			\draw[thick,dotted] (12.9,0)--(13.4,0);
			
			\draw[thick,dashed] (0,4.4)-- (0,0.5);
			\draw[thick,dashed] (0,4.4)-- (12,1);
			
			\draw[thick,dashed] (5.1,4.4)-- (3,0.5);
			\draw[thick,dashed] (5.1,4.4)-- (12,1);
			
			\draw[thick,dashed] (11,4.4)-- (7,0.5);
			\draw[thick,dashed] (11,4.4)-- (12,1);
			
			\draw[thick] (1.2,4.4)-- (9.5,0.55);
			\draw[thick] (6.3,4.4)-- (9.5,0.55);
			\draw[thick] (12.2,4.4)-- (9.5,0.55);
			
			\draw[thick] (2,4.5)-- (10.8,0.55);
			\draw[thick] (7,4.5)-- (10.8,0.55);
			\draw[thick] (13,4.5)-- (10.8,0.55);
			
			\draw[thick] (2.5,4.5)-- (12.1,0.55);
			\draw[thick] (7.5,4.5)-- (12.1,0.55);
			\draw[thick] (13.5,4.5)-- (12.1,0.55);
			
			\draw[thick] (3.5,4.5)-- (14.2,0.55);
			\draw[thick] (8.5,4.5)-- (14.2,0.55);
			\draw[thick] (14.5,4.5)-- (14.2,0.55);
			
			
			\draw (0,4.75) node {$X_1$};
			\draw (5.1,4.75) node {$X_2$};
			\draw (11,4.75) node {$X_{q-1}$};
			
			\draw (1.2,4.75) node {$Y_1$};
			\draw (6.2,4.75) node {$Y_2$};
			\draw (12.2,4.75) node {$Y_{q-1}$};
			
			\draw (2.4,4.75) node {$Z_1$};
			\draw (7.5,4.75) node {$Z_2$};
			\draw (13.4,4.75) node {$Z_{q-1}$};
			
			\draw (9.5,0) node {$B_1$};
			\draw (10.8,0) node {$B_2$};
			\draw (12.1,0) node {$B_3$};
			\draw (14.2,0) node {$B_{h'+1}$};
			\draw (1.5,5.4) node[above] {$S_1$};
			\draw (6.5,5.4) node[above] {$S_2$}; 
			\draw (12.5,5.4) node[above] {$S_{q-1}$};
			\draw (0,-.6) node[below] {$T_1$};
			\draw (3,-.6) node[below] {$T_2$};
			\draw (7,-.6) node[below] {$T_{q-1}$};
			\draw (11.9,-1) node[below] {$B$};
		\end{tikzpicture}
		\caption{The figure illustrates the edges other than the edges belonging to the $(q-1)$-regular subgraph of the tight example
			in Theorem~\ref{thm:tightex1}. An ellipse represents a group of vertices,
			the number at the center is the cardinality of the group. A $\bullet$
			represents a single vertex. An edge (dashed or bold) between two groups indicates the
			complete set of edges between them.}
		\label{fig:fig4}
	\end{figure}
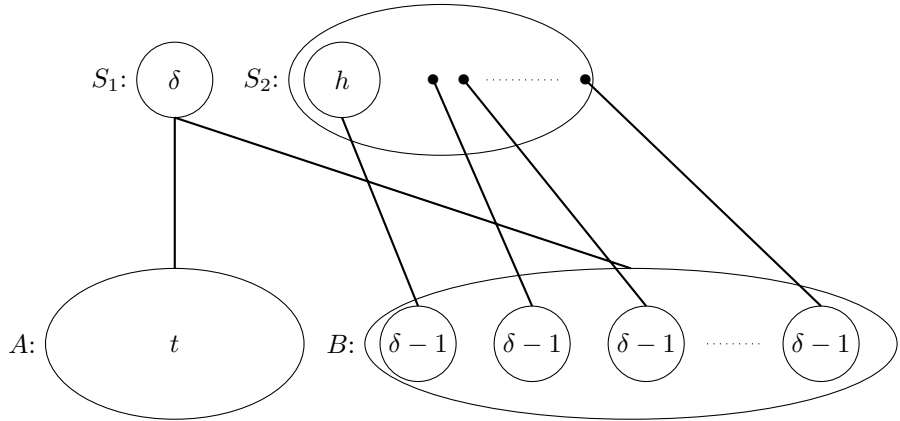

\end{document}